\newtheorem{open problem}{Open Problem}
\journalname{}
\begin{document}

\title{Extended quasi-cyclic constructions of quantum codes and entanglement-assisted
quantum codes}

\titlerunning{Extended quasi-cyclic constructions of quantum codes and entanglement-assisted
quantum codes}        

\author{Jingjie Lv \and Ruihu Li \and Yu Yao}


\institute{
               Jingjie Lv\at
              Department of Basic Sciences, Air Force Engineering
              University\\ Xi'an, 710051, P. R. China\\
              \email{juxianljj@163.com}           
             \and
             \Letter~ Ruihu Li\at
              Department of Basic Sciences, Air Force Engineering
              University\\ Xi'an, 710051, P. R. China\\
              \email{llzsy110@163.com}
\and Yu Yao\at
              Department of Basic Sciences, Air Force Engineering
              University\\ Xi'an, 710051, P. R. China\\
              \email{njjbxqyy@163.com}
              }
\date{Received: date / Accepted: date}

\maketitle

\begin{abstract}
Construction of quantum codes and entanglement-assisted quantum
codes with good parameters via classical codes is an important task
for quantum computing and quantum information. In this paper, by a
family of one-generator quasi-cyclic codes, we provide quasi-cyclic
extended constructions that preserve the self-orthogonality to
obtain stabilizer quantum codes. As for the computational results,
some binary and ternary stabilizer codes with good parameters are
constructed. Moreover, we
 present methods to construct maximal-entanglement
entanglement-assisted quantum codes by means of the class of
quasi-cyclic codes and their extended codes. As an application, some
good maximal-entanglement entanglement-assisted quantum codes are
obtained and their parameters are compared.

\keywords{Quasi-cyclic codes\and  Extended constructions\and Quantum
codes\and Entanglement-assisted quantum codes}
\end{abstract}
\section{Introduction}

Quantum error-correcting codes (QECCs, for short) were introduced to
reduce the effects of environmental and operational noise
(decoherence). Reducing the decoherence or controlling the
decoherence to an acceptable level is a key challenge for
researchers. After the pioneering work of Shor and Steane in
\cite{Shor,Steane}, the theory of QECCs has been developed rapidly.
In \cite{Calderbank}, it has been proven that binary QECCs can be
constructed from classical codes over $\mathbb{F}_{2}$ or
$\mathbb{F}_{4}$ with certain self-orthogonal properties. Then,
Ashikhmin et al. \cite{Ashikhmin} generalized these results to
non-binary case . Afterwards, many good QECCs have been constructed
by using classical linear codes over finite fields
\cite{Aly,Ezerman,Gao,Kai,Ketkar,Lv1,Ma,Tonchev,Zhu}. Another
important discovery in the quantum error-correcting area was the
 entanglement-assisted quantum codes (EAQECCs, for short).
The concept of an EAQECC was introduced by Brun et al. \cite{Brun},
which overcame the barrier of the self-orthogonal condition. They
proved that if shared entanglement is available between the sender
and receiver in advance, non-self-orthogonal classical codes can be
used to construct EAQECCs. There are many researchers presented some
constructions of good EAQECCs
\cite{Bowen,Guenda,Lai1,Lai2,Liu,Lu1,Lu2,Qian3,Wang,Wilde1}.
\par
 Quasi-cyclic (QC) codes form an important
class of linear codes with a rich algebraic structure, which is the
generalization of cyclic codes. Many QC codes have improved the
earlier known minimum distances \cite{Daskalov,Siap}. Moreover, QC
codes meet a modified version of the Gilbert-Vashamov (GV) bound
\cite{Kasami,Ling2}.
 Naturally, QC codes can also be applied to construct
QECCs. In \cite{Hagiwara}, Hagiwara et al. studied constructions of
QECCs from QC LDPC codes with a probabilistic method. In 2018,
Galindo et al. \cite{Galindo} used two-generators QC codes that were
dual-containing to construct QECCs. In 2019, Ezerman et al.
\cite{Ezerman} employed QC codes with large Hermitian hulls to
provide QECCs and gained a record-breaking binary $[[31,9,7]]$ QECC.
In \cite{Lv1,Lv2}, we have obtained some good QECCs from
one-generator and two-generators QC codes that are symplectic
self-orthogonal, respectively. It is well-known that extended
constructions of linear error-correcting codes are extremely
effective methods to obtain new codes with good performance, such as
famous generalized Reed-Muller codes, generalized Reed-Solomon codes
and so on. In fact, extended constructions can also be utilized to
construct QECCs and EAQECCs. In \cite{Tonchev}, Tonchev presented
doubling extended constructions and obtained a new optimal binary
$[[28,12,6]]$ QECC, which improved the corresponding lower bounds on
minimum distance at that time. In 2018, Guenda et al. \cite{Guenda}
applied extended constructions to design families of EAQECCs with
good error-correcting performance requiring desirable amounts of
entanglement. However, until now, there are no related extended
constructions of QECCs and EAQECCs from QC codes.

\par
Inspired by the above work, we provide QC extended constructions to
obtain QECCs and maximal-entanglement EAQECCs. This paper is
organized as follows. In Sect. 2, we discuss some preliminary
concepts and propose a family of one-generator QC codes, which are
self-orthogonal with respect to the Hermitian inner product. Sect. 3
provides QC extended constructions that preserve the
self-orthogonality to construct good QECCs. In Sect. 4, we present
methods of constructing maximal-entanglement EAQECCs based on these
QC codes and their extended codes. Sect. 5 concludes this paper.
\par
\section{Preliminaries}
let $\mathbb{F}_{q^2}$ be the finite field with $q^2$ elements where
$q$ is a power of prime $p$. It is obvious that the characteristic
of $\mathbb{F}_{q^2}$ is $p$. Given two vectors $\textbf{u}=(u_0,
u_1, \ldots, u_{n-1})$ and $\textbf{v}=(v_0, v_1, \ldots, v_{n-1})
\in \mathbb{F}_{q^2}^n$, their Hermitian inner product is defined as
$\langle \textbf{u},\textbf{v}\rangle _h=\sum_{i=0}^{n-1}u_i^qv_i$.
Recall that an $[n,k]_{q^{2}}$ linear code $\mathscr{C}$ is a linear
subspace of $\mathbb{F}_{q^2}^{n}$ with dimension $k$. For any
codeword $\textbf{c}\in\mathscr{C}$, the Hamming weight of
$\textbf{c}$ is the number of nonzero coordinates in $\textbf{c}$.
The minimum distance $d$ of a linear code $\mathscr{C}$ equals to
the smallest weight of its nonzero codewords. A generator matrix is
a $k\times n$ matrix whose rows form a basis for $\mathscr{C}$.
Given a linear code $\mathscr{C}\subset \mathbb{F}_{q^2}^{n}$, the
Hermitian dual code of $\mathscr{C}$ is
$\mathscr{C}^{\perp_h}=\{\vec{v} \in \mathbb{F}_{q^2}^{n}|\langle
\vec{u},\vec{v}\rangle _h=0,~\forall \vec{u} \in \mathscr{C} \}$. If
$\mathscr{C}\subset \mathscr{C}^{\perp_h}$, then the code
$\mathscr{C}$ is Hermitian self-orthogonal. The Hermitian hull of
$\mathscr{C}$ is the intersection ${\rm
Hull}_h(\mathscr{C})=\mathscr{C}\cap \mathscr{C}^{\perp_h}$. For a
$k\times n$ matrix $G=(g_{ij})_{k\times n}$ and a vector $\textbf{v}
= (v_1, v_2,\ldots,v_n)$
 over $\mathbb{F}_{q^2}$ (viewed as a $1\times n$ matrix), we define
$G^{q}=(g_{ij}^{q})_{k\times n}$ and $\textbf{v}^{q}=(v_1^{q},
v_2^{q},\ldots,v_n^{q})$. Denote by $A^{\dag}$ and
$\textbf{v}^{\dag}$ the conjugate transpose matrices of $A$ and
$\textbf{v}$. It is well-known that a linear code $\mathscr{C}$ is
Hermitian self-orthogonal if and only if $G G^\dag=\textbf{0}$,
where $G$ is a generator matrix of $\mathscr{C}$ and $\textbf{0}$
denotes the zero matrix.
\par
Let $\mathbb{R}_n=\mathbb{F}_{q^2}[x]/\langle x^n-1\rangle$ and
$\mathscr{C}$ be a $q^2$-ary linear code of length $2n$. For any
codeword
$\textbf{c}=(c_{0},c_{1},\ldots,c_{n-1},c_{n},c_{n+1},\ldots,c_{2n-1})\in
\mathscr{C}$, define $\psi(\textbf{c})=(
c_{n-1},c_{0},\ldots,c_{n-2},c_{2n-1}, c_{n},\ldots,c_{2n-2}).$
 If $\mathscr{C}=\psi(\mathscr{C})$, then
we call the code $\mathscr{C}$ a quasi-cyclic (QC) code of length
$2n$ and index 2 over $\mathbb{F}_{q^2}$. Note that a vector
$\textbf{c}=(a_{0},\ldots,a_{n-1},b_{0}, \ldots,b_{n-1}) \in
\mathbb{F}_{q^2}^{2n}$ can be identified with $(a(x),b(x))\in
\mathbb{R}_{n}^{2}$, where $a(x)=a_{0}+a_{1}x+\cdots+a_{n-1}x^{n-1}
$ and $b(x)=b_{0}+b_{1}x+\cdots+b_{n-1}x^{n-1}.$ Further,
$\psi(\textbf{c})$ corresponds to $(xa(x),xb(x))$ in
$\mathbb{R}_{n}^{2}$. Therefore, a QC code $\mathscr{C}$ of index 2
can be viewed algebraically as an $\mathbb{R}_n$-submodule of
$\mathbb{R}_n^{2}$. If $\mathscr{C}$ is generated by
$\mathcal{G}(x)=(g_{1}(x),g_{2}(x))\in\mathbb{R}_{n}^2$, then
$\mathscr{C}$ is a one-generator QC code of index 2. Define
$g(x)={\rm gcd}(g_{1}(x),g_{2}(x),x^{n}-1)$ and
$h(x)=(x^{n}-1)/g(x).$ Polynomials $g(x)$ and $h(x)$ are the
generator polynomial and the parity-check polynomial of
$\mathscr{C}$, respectively. Further, the dimension of $\mathscr{C}$
is ${\rm deg}(h(x))$. One can refer to \cite{Seguin1} for more
details.
\par
Attached to polynomial $f(x)=f_{0}+f_{1}x+\cdots+f_{n-1}x^{n-1}\in
\mathbb{R}_{n}$, we define $\bar{f}(x)=f_{0}+ f_{n-1}x+
f_{n-2}x^{2}+\cdots+ f_{1}x^{n-1}$ and$
f^{q}(x)=f_{0}^{q}+f_{1}^{q}x+\cdots+ f_{n-1}^{q}x^{n-1}.$ In
addition, if $f(x)\cdot h(x)=x^n-1$, then
 $f^{\bot}(x)=x^{{\rm deg}(h(x))}h(\frac{1}{x}).$
In the following, a class of one-generator QC codes that are
Hermitian self-orthogonal is introduced, whose some properties can
be obtained from \cite{Galindo}.
\par
\begin{definition}
Let $\mathscr{C}_{q^2}(f,g)$ be a QC code over $\mathbb{F}_{q^2}$ of
length $2n$ generated by $(g(x),f(x)g(x))$, where $f(x)$ and $g(x)$
are polynomials in $\mathbb{R}_{n}$ such that $g(x)$ divides
$x^n-1$.
\end{definition}
\par

\begin{lemma}(\cite{Galindo}, Proposition 14)
The Hermitian dual code $\mathscr{C}^{\bot_{h}}_{q^2}(f,g)$ of the
QC code $\mathscr{C}_{q^2}(f,g)$ over $\mathbb{F}_{q^2}$ is
generated by the pairs $({g^{\bot q}}(x),0)$ and
$(-\bar{f}^q(x),1)$.
\end{lemma}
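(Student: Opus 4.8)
The plan is to translate the Hermitian orthogonality conditions into a single polynomial congruence in $\mathbb{R}_n$, verify that the two proposed pairs satisfy it, and then match dimensions. First I would record the algebraic translation of the inner product. Writing a codeword of $\mathscr{C}_{q^2}(f,g)$ as $x^{t}\cdot(g(x),f(x)g(x))$ for $t=0,1,\ldots,n-1$ and a candidate dual word as $(u(x),v(x))\in\mathbb{R}_n^2$, a direct coefficient computation shows that $\langle x^{t}(g,fg),(u,v)\rangle_h$ is exactly the coefficient of $x^{t}$ in $\overline{g^{q}}(x)\,u(x)+\overline{(fg)^{q}}(x)\,v(x)$; here one uses $\bar{a}(x)\equiv a(x^{-1})\pmod{x^{n}-1}$, so that reversal converts the shift-sum $\sum_i a_i^{q}b_{i+t}$ into a product coefficient. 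Since $\mathscr{C}_{q^2}(f,g)$ is the $\mathbb{R}_n$-span of its generator, orthogonality to all shifts of the generator is equivalent to orthogonality to the whole code. Hence
\begin{equation*}
(u,v)\in\mathscr{C}^{\bot_{h}}_{q^2}(f,g)\iff \overline{g^{q}}(x)\,u(x)+\overline{(fg)^{q}}(x)\,v(x)\equiv 0 \pmod{x^{n}-1}.
\end{equation*}

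Next I would check that both proposed pairs satisfy this congruence, i.e. that $D:=\langle (g^{\bot q},0),(-\bar f^{q},1)\rangle\subseteq\mathscr{C}^{\bot_{h}}_{q^2}(f,g)$. Both the reversal $a\mapsto\bar a$ and the conjugation $a\mapsto a^{q}$ are ring homomorphisms of $\mathbb{R}_n$ fixing $x^{n}-1$, so $\overline{(fg)^{q}}=\bar f^{q}\,\overline{g^{q}}$. For $(-\bar f^{q},1)$ the left side becomes $-\overline{g^{q}}\,\bar f^{q}+\bar f^{q}\,\overline{g^{q}}=0$, which holds identically by commutativity. For $(g^{\bot q},0)$ the condition reduces to $\overline{g^{q}}\,g^{\bot q}=(\bar g\,g^{\bot})^{q}\equiv 0$, so it suffices to prove $\bar g(x)\,g^{\bot}(x)\equiv 0\pmod{x^{n}-1}$. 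Writing $g^{*}(x)=x^{\deg g}g(1/x)$ for the reciprocal, one has $\bar g=x^{\,n-\deg g}g^{*}$ and $g^{\bot}=h^{*}$ with $h=(x^{n}-1)/g$; multiplicativity of the reciprocal gives $g^{*}h^{*}=(gh)^{*}=(x^{n}-1)^{*}=-(x^{n}-1)\equiv 0$, whence $\bar g\,g^{\bot}\equiv 0$.

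Finally I would close the argument with a dimension count. By the dimension formula from the preliminaries, $\dim\mathscr{C}_{q^2}(f,g)=\deg h=n-\deg g$, so $\dim\mathscr{C}^{\bot_{h}}_{q^2}(f,g)=2n-(n-\deg g)=n+\deg g$. For the module $D$, projection onto the second coordinate is surjective, since $(-\bar f^{q},1)$ already produces every second component; and the kernel of this projection inside $D$ is $\{(a\,g^{\bot q},0)\}=\langle g^{\bot q}\rangle\times\{0\}$, where $\langle g^{\bot q}\rangle$ is, up to the coefficientwise automorphism $q$, the Euclidean dual of $\langle g\rangle$ and hence has dimension $\deg g$. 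Therefore $\dim D=n+\deg g$, and together with the inclusion $D\subseteq\mathscr{C}^{\bot_{h}}_{q^2}(f,g)$ this forces equality, proving the lemma.

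The step I expect to be the main obstacle is the first one: pinning down the exact polynomial congruence, since it requires correctly tracking how the conjugation $a\mapsto a^{q}$ and the reversal/reciprocal operations interact with cyclic shifts and with multiplication in $\mathbb{R}_n$. Once that translation is set up cleanly, the verification of the two generators is essentially formal and the dimension bookkeeping is routine.
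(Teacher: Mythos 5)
Your proof is correct; note that the paper itself gives no argument for this lemma, importing it directly from Galindo et al.\ (Proposition 14), and your route --- translating Hermitian orthogonality to all cyclic shifts of $(g,fg)$ into the congruence $\overline{g^{q}}\,u+\overline{(fg)^{q}}\,v\equiv 0 \pmod{x^{n}-1}$, verifying the two pairs via $\overline{(fg)^{q}}=\bar f^{q}\,\overline{g^{q}}$ and $g^{*}h^{*}=(x^{n}-1)^{*}\equiv 0$, then matching $\dim D=n+\deg g=2n-\dim\mathscr{C}_{q^2}(f,g)$ --- is essentially the standard argument used in that cited source. In particular your dimension count correctly invokes the paper's formula $\dim\mathscr{C}_{q^2}(f,g)=\deg h=n-\deg g$ (since $\gcd(g,fg,x^{n}-1)=g$), so the inclusion plus equality of dimensions closes the proof as intended.
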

\par

\begin{lemma}(\cite{Galindo}, Proposition 15)
A sufficient condition for $\mathscr{C}_{q^2}(f,g)$ to be contained
in its Hermitian dual code $\mathscr{C}^{\bot_{h}}_{q^2}(f,g)$ is
${g^{\bot q}}(x) \mid g(x).$
\end{lemma}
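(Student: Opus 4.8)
The plan is to exploit the $\mathbb{R}_n$-module structure directly. Since $\mathscr{C}_{q^2}(f,g)$ is generated as an $\mathbb{R}_n$-submodule of $\mathbb{R}_n^2$ by the single pair $(g(x),f(x)g(x))$, and $\mathscr{C}^{\bot_{h}}_{q^2}(f,g)$ is likewise an $\mathbb{R}_n$-module, the containment $\mathscr{C}_{q^2}(f,g)\subseteq\mathscr{C}^{\bot_{h}}_{q^2}(f,g)$ holds precisely when the generator $(g(x),f(x)g(x))$ itself lies in $\mathscr{C}^{\bot_{h}}_{q^2}(f,g)$. By the preceding lemma, the dual is generated by $(g^{\bot q}(x),0)$ and $(-\bar{f}^q(x),1)$, so it suffices to produce $a(x),b(x)\in\mathbb{R}_n$ with
\[
(g(x),f(x)g(x)) = a(x)\,(g^{\bot q}(x),0) + b(x)\,(-\bar{f}^q(x),1).
\]

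Comparing the second coordinates forces $b(x)=f(x)g(x)$ in $\mathbb{R}_n$. Substituting this back, the first-coordinate equation reads
\[
a(x)\,g^{\bot q}(x) = g(x)\bigl(1 + f(x)\bar{f}^q(x)\bigr) \quad\text{in } \mathbb{R}_n.
\]
Hence the entire problem collapses to showing that the element $g(x)\bigl(1+f(x)\bar{f}^q(x)\bigr)$, reduced modulo $x^n-1$, lies in the ideal $\langle g^{\bot q}(x)\rangle\subseteq\mathbb{R}_n$, which is exactly where the hypothesis will be used.

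Here I would first record the structural remark that $g^{\bot q}(x)\mid x^n-1$: writing $x^n-1=g(x)h(x)$, the polynomial $g^{\bot}(x)=x^{\deg(h(x))}h(1/x)$ is the reciprocal of $h(x)$, and taking reciprocals of $g(x)h(x)=x^n-1$ shows it divides $x^n-1$; applying the coefficientwise $q$-th power, which fixes $x^n-1$ since its coefficients lie in the prime field, yields $g^{\bot q}(x)\mid x^n-1$. Consequently both $g(x)$ and $g^{\bot q}(x)$ are genuine divisors of $x^n-1$, so the divisibility hypothesis $g^{\bot q}(x)\mid g(x)$ is a factorization $g(x)=g^{\bot q}(x)\,t(x)$ in $\mathbb{F}_{q^2}[x]$. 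Reducing this identity modulo $x^n-1$ gives
\[
g(x)\bigl(1+f(x)\bar{f}^q(x)\bigr) = g^{\bot q}(x)\cdot\bigl[t(x)\bigl(1+f(x)\bar{f}^q(x)\bigr)\bigr] \quad\text{in } \mathbb{R}_n,
\]
so $a(x)=t(x)\bigl(1+f(x)\bar{f}^q(x)\bigr)\bmod(x^n-1)$ supplies the required coefficient. This completes the linear combination, placing $(g(x),f(x)g(x))$ in $\mathscr{C}^{\bot_{h}}_{q^2}(f,g)$ and hence proving the containment.

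The only delicate point is the passage between divisibility in the polynomial ring $\mathbb{F}_{q^2}[x]$ and membership in the quotient ring $\mathbb{R}_n=\mathbb{F}_{q^2}[x]/\langle x^n-1\rangle$; once one reads the hypothesis as the polynomial factorization $g=g^{\bot q}t$ and checks that $g^{\bot q}(x)\mid x^n-1$ so the two rings are compatible, the reduction modulo $x^n-1$ is routine. A more computational alternative would verify self-orthogonality directly by showing $\langle(g,fg),\psi^{j}(g,fg)\rangle_h=0$ for all shifts $j$, but the module argument above is shorter and reuses the explicit dual description already available.
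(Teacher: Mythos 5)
Your proof is correct, and it is essentially the standard argument behind the cited result (the paper itself gives no proof of this lemma, quoting it from Galindo et al., Proposition~15): reduce containment of the $\mathbb{R}_n$-module $\mathscr{C}_{q^2}(f,g)$ to membership of its single generator $(g(x),f(x)g(x))$ in the Hermitian dual as described by Lemma~1, read off $b(x)=f(x)g(x)$ from the second coordinate, and use the factorization $g(x)=g^{\bot q}(x)t(x)$ supplied by the hypothesis to solve $a(x)\,g^{\bot q}(x)=g(x)\bigl(1+f(x)\bar{f}^q(x)\bigr)$ in $\mathbb{R}_n$. Your auxiliary observation that $g^{\bot q}(x)\mid x^n-1$ is correct but dispensable here, since the hypothesis already yields the needed polynomial factorization, and your final equation persists unchanged after reduction modulo $x^n-1$ even if the divisibility were read in $\mathbb{R}_n$.
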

\par
Note that if a linear code $\mathscr{C}$ is self-orthogonal, then
the dual dimension is larger than or equal to that of $\mathscr{C}$.
Since $\mathscr{C}^{\bot_h}=(\mathscr{C}^{\bot})^q$, we conclude
that $\mathscr{C}^{\bot_h}$ and $\mathscr{C}^{\bot}$ have the same
weight distribution, where $\mathscr{C}^{\bot}$ denotes the usual
Euclidean dual code of $\mathscr{C}$. In order to obtain the exact
Hermitian dual distance of $\mathscr{C}$, for simplicity, we can
firstly calculate the weight distribution of $\mathscr{C}$, and then
apply the following well-known MacWilliams equation.
\begin{theorem}\cite{MacWilliams}
If $\mathscr{C}$ is an $[n,k,d]$ linear code over $\mathbb{F}_{q}$
with weight enumerator
$$W_{\mathscr{C}}(x,y)=\sum_{i=0}^{n}A_{i}x^{n-i}y^i,$$
where $A_{i}$ denotes the number of codewords in $\mathscr{C}$ with
Hamming weight $i$. Then, the weight enumerator of the Euclidean
dual code $\mathscr{C}^{\bot}$ is given by
$$W_{\mathscr{C}^{\bot}}(x,y)=q^{-k}W_{\mathscr{C}}(x+(q-1)y,x-y).$$
\end{theorem}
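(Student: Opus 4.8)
The plan is to establish this classical identity by discrete Fourier analysis (a Poisson-summation argument) on the additive group $(\mathbb{F}_{q}^{n},+)$. First I would fix a nontrivial additive character $\chi$ of $\mathbb{F}_{q}$ and record the basic orthogonality relation for the subspace $\mathscr{C}$: for every $v\in\mathbb{F}_{q}^{n}$,
$$\sum_{u\in\mathscr{C}}\chi(\langle u,v\rangle)=\begin{cases}|\mathscr{C}|, & v\in\mathscr{C}^{\bot},\\ 0, & v\notin\mathscr{C}^{\bot}.\end{cases}$$
This holds because $u\mapsto\chi(\langle u,v\rangle)$ is a character of the finite group $\mathscr{C}$, and it is the trivial character precisely when $v\in\mathscr{C}^{\bot}$, while a nontrivial character sums to zero over a group. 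From this, for any function $f\colon\mathbb{F}_{q}^{n}\to\mathbb{C}$ with Fourier transform $\widehat{f}(v)=\sum_{u\in\mathbb{F}_{q}^{n}}\chi(\langle u,v\rangle)f(u)$, interchanging the order of summation yields the Poisson summation formula
$$\sum_{v\in\mathscr{C}^{\bot}}f(v)=\frac{1}{|\mathscr{C}|}\sum_{u\in\mathscr{C}}\widehat{f}(u).$$

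Next I would choose the weight-tracking function that converts this into the statement about weight enumerators. Take $f(u)=\prod_{i=1}^{n}\phi(u_{i})$, where $\phi(0)=x$ and $\phi(a)=y$ for $a\neq 0$; then $f(u)=x^{\,n-\mathrm{wt}(u)}y^{\,\mathrm{wt}(u)}$, so the left-hand side of Poisson summation is exactly $W_{\mathscr{C}^{\bot}}(x,y)$. Because $f$ is a product over coordinates, its transform factors as $\widehat{f}(v)=\prod_{i=1}^{n}\widehat{\phi}(v_{i})$ with $\widehat{\phi}(b)=\sum_{a\in\mathbb{F}_{q}}\chi(ab)\phi(a)$. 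A short two-case character-sum computation gives $\widehat{\phi}(0)=x+(q-1)y$ and, using $\sum_{a\in\mathbb{F}_{q}}\chi(ab)=0$ for $b\neq 0$ (hence $\sum_{a\neq 0}\chi(ab)=-1$), gives $\widehat{\phi}(b)=x-y$ for $b\neq 0$. Collecting the factors according to the weight of $v$ yields $\widehat{f}(v)=(x+(q-1)y)^{\,n-\mathrm{wt}(v)}(x-y)^{\,\mathrm{wt}(v)}$, whence $\sum_{u\in\mathscr{C}}\widehat{f}(u)=W_{\mathscr{C}}(x+(q-1)y,\,x-y)$.

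Substituting these two evaluations into the Poisson summation formula and using $|\mathscr{C}|=q^{k}$ produces the claimed identity $W_{\mathscr{C}^{\bot}}(x,y)=q^{-k}W_{\mathscr{C}}(x+(q-1)y,x-y)$. The main obstacle is conceptual rather than computational: it lies in setting up the character-theoretic orthogonality relation and the resulting Poisson summation correctly, and in being careful that $\chi$ is nontrivial so that the vanishing sum $\sum_{a\in\mathbb{F}_{q}}\chi(ab)=0$ is available for $b\neq 0$. Once Poisson summation is in hand, the factorization of $\widehat{f}$ and the evaluation of $\widehat{\phi}$ are routine, and the homogeneous bookkeeping of the weights finishes the argument.
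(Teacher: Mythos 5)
Your proof is correct: the orthogonality relation, the Poisson summation over the pair $(\mathscr{C},\mathscr{C}^{\bot})$, and the coordinatewise evaluation $\widehat{\phi}(0)=x+(q-1)y$, $\widehat{\phi}(b)=x-y$ for $b\neq 0$ are all sound, and together they yield exactly the stated identity. The paper itself gives no proof --- it cites this as a classical result from MacWilliams and Sloane --- and your character-theoretic argument is essentially the standard proof found in that reference, so there is nothing to reconcile.
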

\section{Extended quasi-cyclic constructions of quantum codes}
 Recall that a $q$-ary quantum error-correcting code (QECC)
 of length
 $n$ is a $K$-dimensional subspace of the $q^{n}$-dimensional Hilbert
space $(\mathbb{C}^{q})^{\otimes n}$, where $\mathbb{C}$ denotes the
complex field. If $K = q^k$, then the QECC is represented by
$[[n,k,d]]_q$, where $d$ is the minimum distance. Just as the
classical case, one of the main problems in quantum error correction
is to construct QECCs with good parameters. When fixing the code
length $n$ and dimension $k$, we expect to gain a big minimum
distance $d$. Conversely, when the minimum distance $d$ is equal, we
want the code rate $\frac{k}{n}$ to be greater. Available in
\cite{Grassl1}, there is a database of best known
 binary QECCs. For ternary QECCs, code tables \cite{Edel} are
kept online by Edel according to their explicit constructions. In
order to evaluate the superiority of QECCs, Feng et al. \cite{Feng}
presented a quantum Gilbert-Vashamov (GV) bound as follows, which is
closely related to the size of the finite field.
\begin{theorem} (\cite{Feng},~quantum
Gilbert-Vashamov bound) Let $n>k\geq2$ with $n\equiv k$~({\rm mod}
2) and $d\geq2$. Then there exists a pure stabilizer QECC with
parameters $[[n,k,d]]_q$ if the inequality
$$\frac{q^{n-k+2}-1}{q^2-1}>\sum_{i=1}^{d-1}(q^2-1)^{i-1}\binom{n}{i}$$
is satisfied.
\end{theorem}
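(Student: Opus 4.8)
The plan is to reduce the statement to a classical existence question and then settle it by a Gilbert--Varshamov--type averaging argument carried out inside a Hermitian geometry. Since $n\equiv k\pmod 2$, the integer $t:=(n-k)/2$ is well defined, and by the standard stabilizer correspondence (\cite{Calderbank,Ashikhmin,Ketkar}) it suffices to construct an $\mathbb{F}_{q^2}$-linear code $C\subseteq\mathbb{F}_{q^2}^{\,n}$ of dimension $t$ that is Hermitian self-orthogonal, $C\subseteq C^{\bot_h}$, and whose Hermitian dual $C^{\bot_h}$ has minimum Hamming weight at least $d$. The Hermitian self-orthogonal construction recalled above (see the preceding lemmas) then turns such a $C$ into a stabilizer code with parameters $[[\,n,n-2t,\ge d\,]]_q=[[\,n,k,\ge d\,]]_q$; because we shall forbid from $C^{\bot_h}$ every nonzero vector of weight $\le d-1$ (and not merely those lying outside $C$), the resulting code is automatically pure.

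First I would restate the dual-distance requirement as an avoidance condition: $d(C^{\bot_h})\ge d$ holds if and only if no nonzero $w\in\mathbb{F}_{q^2}^{\,n}$ with $\mathrm{wt}(w)\le d-1$ satisfies $w\in C^{\bot_h}$, equivalently $C\not\subseteq w^{\bot_h}$ for every such $w$. As scaling by $\mathbb{F}_{q^2}^{*}$ preserves both weight and orthogonality, the obstructions are the one-dimensional subspaces spanned by low-weight vectors, and counting these ``bad lines'' gives exactly $\sum_{i=1}^{d-1}\binom{n}{i}(q^2-1)^{i-1}$, the right-hand side of the inequality. I would then average over the family of totally isotropic (Hermitian self-orthogonal) $t$-dimensional subspaces $C$: for each fixed bad line $\langle w\rangle$ I would estimate the fraction of such $C$ contained in the hyperplane $w^{\bot_h}$ and show it is at most $\frac{q^2-1}{q^{n-k+2}-1}$. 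A union bound over the bad lines then makes the expected number of violated constraints at most $\bigl(\sum_{i=1}^{d-1}\binom{n}{i}(q^2-1)^{i-1}\bigr)\cdot\frac{q^2-1}{q^{n-k+2}-1}$, which is strictly less than $1$ precisely when the hypothesised inequality holds; hence some $C$ avoids every bad line and does the job.

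The crux, and the point at which the quantum bound genuinely departs from the ordinary Varshamov count, is the per-line estimate $\Pr[\,C\subseteq w^{\bot_h}\,]\le\frac{q^2-1}{q^{n-k+2}-1}$. Candidate codes here are not arbitrary subspaces but totally isotropic ones, so this probability is a ratio of enumerations of totally isotropic $t$-subspaces: those lying inside $w^{\bot_h}$ against all of them. Evaluating it requires the subspace-counting formulas for nondegenerate Hermitian spaces, together with a case distinction according to whether $w$ is isotropic (so that $w^{\bot_h}$ is degenerate with radical $\langle w\rangle$) or non-isotropic (so that $w^{\bot_h}$ is again nondegenerate of dimension $n-1$); one must verify that both cases obey the same bound. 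I expect the main obstacle to be carrying out this Hermitian-geometry enumeration cleanly and checking that the worst case collapses exactly to $\frac{q^2-1}{q^{n-k+2}-1}$; the scale is at least reassuring, since both sides behave like $q^{-(n-k)}$. Finally I would confirm that the boundary hypotheses $n>k\ge 2$ and $d\ge 2$ keep $t\ge 1$ and all relevant dimensions positive, so that the isotropic Grassmannian is nonempty and the averaging argument is meaningful.
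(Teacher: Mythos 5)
A preliminary remark on the comparison itself: the paper contains no proof of this statement --- Theorem 2 is quoted verbatim from Feng and Ma \cite{Feng} --- so your attempt has to be measured against the original counting argument, not against anything in this paper. Your reduction is sound as far as it goes: since $n\equiv k\pmod 2$, setting $t=(n-k)/2$ and producing a Hermitian self-orthogonal $t$-dimensional $C\subseteq\mathbb{F}_{q^2}^n$ with $d(C^{\perp_h})\ge d$ does yield a pure $[[n,k,d]]_q$ code, your count of bad lines $\sum_{i=1}^{d-1}\binom{n}{i}(q^2-1)^{i-1}$ is exact, and your closing checks ($t\ge 1$, Witt index $\lfloor n/2\rfloor\ge t$) are fine. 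The genuine gap is precisely the step you flag as the crux: the per-line estimate $\Pr[\,C\subseteq w^{\perp_h}\,]\le\frac{q^2-1}{q^{n-k+2}-1}$ is false. Both probabilities can be computed exactly for the uniform measure on totally isotropic $t$-subspaces, using Witt transitivity together with the fact that the radical of $C^{\perp_h}$ is $C$, with nondegenerate quotient of dimension $k$. Writing $\epsilon=(-1)^k=(-1)^n$, for non-isotropic $w$ one finds $\Pr[\,C\subseteq w^{\perp_h}\,]=\frac{q^k-\epsilon}{q^n-\epsilon}$, and cross-multiplying against your claimed bound gives $(q^k-\epsilon)(q^{n-k+2}-1)-(q^2-1)(q^n-\epsilon)=(q^{n-k}-1)(q^k-\epsilon q^2)\ge 0$, with strict inequality whenever $k>2$ or $k$ is odd (equality only at $k=2$, $\epsilon=1$); for isotropic $w$ the probability equals $\frac{q^{n+k-1}+\epsilon q^{n-1}(q-1)-1}{q^{2n-1}+\epsilon q^{n-1}(q-1)-1}$, which is larger still --- for instance $q=2$, $n=4$, $k=2$ gives $39/135\approx 0.289$ against your claimed $1/5$. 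So the worst case does not ``collapse exactly'' to $\frac{q^2-1}{q^{n-k+2}-1}$: both Hermitian-geometry cases exceed it.

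This is not a repairable constant-chasing issue inside your scheme. The hypothesis allows the number of bad lines to approach $\frac{q^{n-k+2}-1}{q^2-1}$, so a union bound needs the per-line probability to be at most $\frac{q^2-1}{q^{n-k+2}-1}$ uniformly; with the true values, e.g.\ $p_{\mathrm{iso}}\approx q^{k-n}\bigl(1+\epsilon(q-1)q^{-k}\bigr)$, the expected number of violations can exceed $1$ near the threshold if the bad lines were adversarially all isotropic. A correct proof at the stated constant must therefore use more than a worst-case line type: either exploit the interaction between Hamming weight and isotropy (weight-$1$ lines are never isotropic; over $\mathbb{F}_4$ every weight-$2$ line is isotropic while every weight-$3$ line is not), or follow Feng--Ma's route of exact enumeration in the self-orthogonal geometry rather than a single uniform probability estimate fed into a union bound. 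As written, your proposal proves only a strictly weaker Gilbert--Varshamov-type statement, with threshold roughly $q^{n-k}$ diminished by the isotropic correction factor, not the theorem quoted.
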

\par
 One can check that almost all the QECCs meet this bound. If not, these
codes usually have particularly good parameters. It is generally
known that there exists an important connection between QECCs and
classical Hermitian self-orthogonal linear codes from the following
theorem.
\begin{theorem} \cite{Ashikhmin}
A Hermitian self-orthogonal $[n,k]_{q^{2}}$ linear code
$\mathscr{C}$ such that there are no vectors of weight less than $d$
in $\mathscr{C}^{\perp_h}\backslash \mathscr{C} $ yields a QECC with
parameters $[[n,n-2k,d]]_{q}$.
\end{theorem}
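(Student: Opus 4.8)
The plan is to reduce the Hermitian construction to the symplectic (stabilizer) formalism, in which the passage from a classical code to a quantum code is governed by self-orthogonality under a trace-symplectic form on $\mathbb{F}_q^{2n}$. First I would fix an $\mathbb{F}_q$-basis $\{e_1,e_2\}$ of $\mathbb{F}_{q^2}$ and write each coordinate $v_i\in\mathbb{F}_{q^2}$ as $v_i=x_ie_1+y_ie_2$ with $x_i,y_i\in\mathbb{F}_q$, yielding an $\mathbb{F}_q$-linear bijection $\Phi:\mathbb{F}_{q^2}^n\to\mathbb{F}_q^{2n}$, $\mathbf{v}\mapsto(\mathbf{x}\mid\mathbf{y})$. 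The key property to establish is that the symplectic form of the images is controlled by the Hermitian form: expanding both sides on the fixed basis shows that the trace-symplectic inner product $\langle\Phi(\mathbf{u}),\Phi(\mathbf{v})\rangle_s$ is an $\mathbb{F}_q$-linear expression in $\mathrm{Tr}_{q^2/q}(\langle\mathbf{u},\mathbf{v}\rangle_h)$ and $\mathrm{Tr}_{q^2/q}(\langle\mathbf{v},\mathbf{u}\rangle_h)$.

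Next I would exploit the symmetry of Hermitian orthogonality. Since $v_i\in\mathbb{F}_{q^2}$ satisfies $v_i^{q^2}=v_i$, one has $\langle\mathbf{v},\mathbf{u}\rangle_h=(\langle\mathbf{u},\mathbf{v}\rangle_h)^q$, so $\langle\mathbf{u},\mathbf{v}\rangle_h=0$ if and only if $\langle\mathbf{v},\mathbf{u}\rangle_h=0$. Consequently, whenever $\langle\mathbf{u},\mathbf{v}\rangle_h=0$ both traces above vanish and $\langle\Phi(\mathbf{u}),\Phi(\mathbf{v})\rangle_s=0$. Because $\mathscr{C}\subseteq\mathscr{C}^{\perp_h}$, every pair in $\mathscr{C}$ is Hermitian orthogonal, so $\Phi(\mathscr{C})$ is self-orthogonal under $\langle\cdot,\cdot\rangle_s$. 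As $\mathscr{C}$ has $\mathbb{F}_{q^2}$-dimension $k$, its image $\Phi(\mathscr{C})$ is an $\mathbb{F}_q$-linear symplectic self-orthogonal code of $\mathbb{F}_q$-dimension $2k$, i.e. $|\Phi(\mathscr{C})|=q^{2k}$. Invoking the fundamental correspondence between symplectic self-orthogonal codes and stabilizer codes \cite{Ashikhmin}, a symplectic self-orthogonal code in $\mathbb{F}_q^{2n}$ of size $q^{n-k'}$ yields an $[[n,k',d]]_q$ stabilizer code; setting $n-k'=2k$ gives the claimed dimension $k'=n-2k$.

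Finally I would settle the minimum distance. The same trace identity together with the symmetry of Hermitian orthogonality gives the containment $\Phi(\mathscr{C}^{\perp_h})\subseteq\Phi(\mathscr{C})^{\perp_s}$, and since $\dim_{\mathbb{F}_q}\mathscr{C}^{\perp_h}=2(n-k)$ equals $\dim_{\mathbb{F}_q}\Phi(\mathscr{C})^{\perp_s}=2n-2k$, the containment is an equality $\Phi(\mathscr{C}^{\perp_h})=\Phi(\mathscr{C})^{\perp_s}$. Moreover $\Phi$ preserves weight: a coordinate satisfies $v_i=0$ exactly when $(x_i,y_i)=(0,0)$, so a vector of Hamming weight $w$ over $\mathbb{F}_{q^2}$ maps to a vector of symplectic weight $w$. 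Hence the hypothesis that $\mathscr{C}^{\perp_h}\setminus\mathscr{C}$ contains no vector of Hamming weight below $d$ translates precisely into $\Phi(\mathscr{C})^{\perp_s}\setminus\Phi(\mathscr{C})$ having symplectic minimum weight at least $d$, which is the condition forcing the associated stabilizer code to have minimum distance $d$, completing the identification of the parameters $[[n,n-2k,d]]_q$.

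I expect the main obstacle to be verifying the trace-symplectic correspondence and the dual-preservation $\Phi(\mathscr{C}^{\perp_h})=\Phi(\mathscr{C})^{\perp_s}$ with the correct normalization of the basis $\{e_1,e_2\}$, since a careless choice would misalign the Hermitian and symplectic orthogonality structures; once the identity and the symmetry $\langle\mathbf{v},\mathbf{u}\rangle_h=(\langle\mathbf{u},\mathbf{v}\rangle_h)^q$ are in hand, the remainder is dimension bookkeeping and weight tracking.
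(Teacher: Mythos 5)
The paper itself offers no proof of this theorem: it is quoted directly from Ashikhmin--Knill \cite{Ashikhmin}, so there is no internal argument to compare against; your proposal reconstructs the standard proof from that source (expand over an $\mathbb{F}_q$-basis of $\mathbb{F}_{q^2}$, transfer Hermitian self-orthogonality to symplectic self-orthogonality, identify the duals by a dimension count, track weights), and its overall structure is sound. One intermediate claim is stated incorrectly, however: $\langle\Phi(\mathbf{u}),\Phi(\mathbf{v})\rangle_s$ is \emph{not} an $\mathbb{F}_q$-linear expression in $\mathrm{Tr}_{q^2/q}(\langle\mathbf{u},\mathbf{v}\rangle_h)$ and $\mathrm{Tr}_{q^2/q}(\langle\mathbf{v},\mathbf{u}\rangle_h)$, since these two traces coincide (as $\langle\mathbf{v},\mathbf{u}\rangle_h=\langle\mathbf{u},\mathbf{v}\rangle_h^q$) while the symplectic form picks out the \emph{alternating} part $\langle\mathbf{u},\mathbf{v}\rangle_h-\langle\mathbf{v},\mathbf{u}\rangle_h$; in odd characteristic a trace-zero but nonzero Hermitian product gives vanishing traces with nonvanishing symplectic form. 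The correct identity is
\begin{equation*}
\langle\Phi(\mathbf{u}),\Phi(\mathbf{v})\rangle_s=\mathrm{Tr}_{q^2/q}\bigl(\delta^{-1}\langle\mathbf{u},\mathbf{v}\rangle_h\bigr),\qquad \delta=e_1^{q}e_2-e_1e_2^{q}\neq0,\quad \delta^{q}=-\delta.
\end{equation*}
This misstatement does not break your proof, because you only invoke the one-directional implication $\langle\mathbf{u},\mathbf{v}\rangle_h=0\Rightarrow\langle\Phi(\mathbf{u}),\Phi(\mathbf{v})\rangle_s=0$ (which the corrected identity delivers) and you recover $\Phi(\mathscr{C}^{\perp_h})=\Phi(\mathscr{C})^{\perp_s}$ by dimension counting rather than by a biconditional. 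One routine remark worth adding: the stabilizer correspondence over $\mathbb{F}_q$ is governed by the trace-symplectic form $\mathrm{Tr}_{q/p}(\langle\cdot,\cdot\rangle_s)$, but since $\Phi(\mathscr{C})$ is $\mathbb{F}_q$-linear, nondegeneracy of $\mathrm{Tr}_{q/p}$ forces its trace-symplectic dual to equal its symplectic dual, so your use of $\perp_s$ is legitimate.
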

\par
Via the Hermitian self-orthogonal quasi-cyclic codes introduced in
Definition 1, next, we will provide our QC extended constructions
that preserve the self-orthogonality.
\begin{proposition}
Assume that $g(x)$ and $f(x)$ are polynomials in $\mathbb{R}_{n}$
satisfying ${g^{\bot q}}(x) \mid g(x)$, then the QC code
$\mathscr{C}_{q^2}(f,g)=[2n,n-{\rm deg}(g(x)),d]_{q^{2}}$ with
generator matrix $G=(G_1,G_2)$ is Hermitian self-orthogonal. Let
$\mathscr{C}_i$ $(i=1,2)$ be a linear code generated by $G_i$. If
there exists a codeword $x^{(i)}\in \mathscr{C}^{\bot_{h}}_{i}$ such
that $\langle
x^{(i)},x^{(i)}\rangle _h=p-1$, where $p$ is the characteristic of $\mathbb{F}_{q^2}$. Then\\
 (i) The code $\mathscr{C}^{'}$ with generator matrix
 \begin{equation*}
 \begin{small}
G^{'}=\left(
 \begin{array}{ccc|ccc|c}
 &     & & &   & & 0\\
 &G_{1}& & &G_{2}&   &  \vdots\\
 &     & & &   & & 0\\
 \hline
 &x^{(1)}&& &0\cdots0&&1
 \end{array}
\right)
\end{small}
\end{equation*}
is a Hermitian self-orthogonal $[2n+1, n-{\rm deg}(g(x))+1]$ linear
code with Hermitian dual distance
$$d^{\bot_{h}}\leq d({\mathscr{C}^{'\bot_{h}}}) \leq d^{\bot_{h}}+1,$$
where $d^{\bot_{h}}$ denotes the Hermitian dual distance of
$\mathscr{C}_{q^2}(f,g)$.\\
 (ii) The code $\mathscr{C}^{''}$ with generator matrix
  \begin{equation*}
  \begin{small}
G^{''}=\left(
 \begin{array}{ccc|ccc|cc}
 &     & & &   & & 0&0\\
 &G_{1}& & &G_{2}&  &  \vdots&\vdots \\
 &     & & &   & & 0&0\\
 \hline
 &x^{(1)}&& &0\cdots0&&1&0\\
\hline
  &0\cdots0&& &x^{(2)}&&0&1\\
 \end{array}
\right)
\end{small}
\end{equation*}
is a Hermitian self-orthogonal $[2n + 2, n-{\rm deg}(g(x)) +2]$
linear code with Hermitian dual distance
$$d^{\bot_{h}}\leq d({\mathscr{C}^{''\bot_{h}}}) \leq d^{\bot_{h}}+2,$$
where $d^{\bot_{h}}$ denotes the Hermitian dual distance of
$\mathscr{C}_{q^2}(f,g)$.
\end{proposition}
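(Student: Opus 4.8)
The plan is to verify Hermitian self-orthogonality through the criterion $G G^{\dag}=\mathbf{0}$ recalled in Section~2, and then to pin down the dual distance by extending and restricting dual codewords one coordinate at a time.

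First I would establish self-orthogonality of $\mathscr{C}'$ in part (i). Write the rows of $G$ as $\mathbf{r}_j=(\mathbf{a}_j,\mathbf{b}_j)$ with $\mathbf{a}_j\in\mathscr{C}_1$ and $\mathbf{b}_j\in\mathscr{C}_2$, and split $G'(G')^{\dag}$ into blocks. The inner products among the extended old rows $(\mathbf{a}_j,\mathbf{b}_j,0)$ coincide with those of $G$ (the appended zero contributes nothing), hence vanish because $\mathscr{C}_{q^2}(f,g)$ is already Hermitian self-orthogonal. For an old row against the new row $(x^{(1)},\mathbf{0},1)$, the inner product reduces (up to conjugation) to $\langle x^{(1)},\mathbf{a}_j\rangle_h$, since the $\mathbf{b}_j$-block meets a zero block and the trailing $1$ meets a $0$; as $\mathbf{a}_j\in\mathscr{C}_1$ and $x^{(1)}\in\mathscr{C}_1^{\perp_h}$, and the Hermitian form satisfies $\langle u,v\rangle_h=0\Leftrightarrow\langle v,u\rangle_h=0$, this is zero. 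Finally the diagonal entry of the new row is $\langle x^{(1)},x^{(1)}\rangle_h+1^{q}\cdot 1=(p-1)+1=p=0$ in characteristic $p$. Thus $G'(G')^{\dag}=\mathbf{0}$. The new row carries a $1$ in the last coordinate where every old row is zero, so it is independent of them and the dimension is $n-\deg(g(x))+1$. Part (ii) is identical, with the extra check that the two new rows $(x^{(1)},\mathbf{0},1,0)$ and $(\mathbf{0},x^{(2)},0,1)$ are mutually orthogonal because their nonzero blocks sit in disjoint coordinate ranges, while each has diagonal entry $\langle x^{(i)},x^{(i)}\rangle_h+1=0$; the trailing $2\times 2$ identity forces the dimension to grow by $2$.

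Next I would bound the Hermitian dual distance from above. Take a minimum-weight word $w\in\mathscr{C}_{q^2}(f,g)^{\perp_h}$, so $\mathrm{wt}(w)=d^{\perp_h}$ and $w$ is orthogonal to every row of $G$. Writing $w=(w^{(1)},w^{(2)})$, I append the single symbol $\alpha=-\langle x^{(1)},w^{(1)}\rangle_h$ to form $w'=(w,\alpha)$. Orthogonality of $w'$ to the extended old rows is inherited from $w$, while orthogonality to the new row is exactly the defining equation of $\alpha$; hence $w'\in\mathscr{C}'^{\perp_h}$ and $\mathrm{wt}(w')\le d^{\perp_h}+1$. The same device with $\alpha_1=-\langle x^{(1)},w^{(1)}\rangle_h$ and $\alpha_2=-\langle x^{(2)},w^{(2)}\rangle_h$ yields the bound $d^{\perp_h}+2$ in part (ii).

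For the lower bound I would argue by restriction. Let $w'=(w,\alpha)$ (respectively $(w,\alpha_1,\alpha_2)$) be a nonzero word of the extended dual code. Orthogonality to the extended old rows forces $w\in\mathscr{C}_{q^2}(f,g)^{\perp_h}$. If $w\neq\mathbf{0}$ then $\mathrm{wt}(w')\ge\mathrm{wt}(w)\ge d^{\perp_h}$, since $w$ is a nonzero dual word. If $w=\mathbf{0}$, orthogonality to the new row(s) forces every appended symbol to vanish, making $w'=\mathbf{0}$, a contradiction; so this case cannot occur. Hence $d(\mathscr{C}'^{\perp_h})\ge d^{\perp_h}$ and likewise in (ii), completing the two-sided estimates. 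The only delicate points are keeping the conjugation convention consistent when passing between $\langle x^{(i)},\cdot\rangle_h$ and membership in $\mathscr{C}_i^{\perp_h}$, and ruling out the degenerate $w=\mathbf{0}$ branch in the lower bound; everything else is direct block bookkeeping.
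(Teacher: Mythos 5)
Your proof is correct, and its self-orthogonality half coincides with the paper's argument: both amount to checking $G'(G')^{\dag}=\mathbf{0}$ and $G''(G'')^{\dag}=\mathbf{0}$ blockwise, using $G_1G_1^{\dag}+G_2G_2^{\dag}=GG^{\dag}=\mathbf{0}$, the vanishing of the mixed entries because $x^{(i)}\in\mathscr{C}_i^{\perp_h}$ (your remark that $\langle u,v\rangle_h=0\Leftrightarrow\langle v,u\rangle_h=0$ is the right way to handle the non-symmetric form here), and the diagonal cancellation $\langle x^{(i)},x^{(i)}\rangle_h+1=(p-1)+1=0$ in characteristic $p$. Where you genuinely diverge is the dual-distance estimate. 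The paper argues in column language: since every $d^{\perp_h}-1$ columns of $G$ are linearly independent, every $d^{\perp_h}-1$ columns of $G'$ and $G''$ are ``obviously'' linearly independent, and it then asserts that both inequalities follow. Strictly speaking, that column argument yields only the lower bound $d(\mathscr{C}'^{\perp_h})\geq d^{\perp_h}$ --- and even there one must separately treat selections that include the appended unit column, which is exactly your degenerate $w=\mathbf{0}$ branch in dual form --- whereas the upper bound requires an additional construction, for instance observing that $d^{\perp_h}$ dependent columns of $G$ together with the new column give $d^{\perp_h}+1$ dependent columns of $G'$. Your codeword-level version supplies precisely this missing step: appending $\alpha=-\langle x^{(1)},w^{(1)}\rangle_h$ to a minimum-weight word $w$ of $\mathscr{C}_{q^2}(f,g)^{\perp_h}$ produces a nonzero word of $\mathscr{C}'^{\perp_h}$ of weight at most $d^{\perp_h}+1$ (and with two appended symbols, at most $d^{\perp_h}+2$ for $\mathscr{C}''$), while your restriction argument recovers the lower bound. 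So your route is the transpose of the paper's and is in fact more complete: it makes explicit the upper-bound half that the paper merely asserts, at the modest cost of tracking the conjugation convention, which you do correctly throughout.
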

\begin{proof}
Obviously, the linear codes $\mathscr{C}^{'}$ and $\mathscr{C}^{''}$
have parameters $[2n + 1, n-{\rm deg}(g(x)) + 1]$ and $[2n + 2,
n-{\rm deg}(g(x)) + 2]$, respectively. A simple computation shows
that
 \begin{small}
 \begin{equation*}
\begin{split}
G^{'}{G^{'}}^\dag&=\left(
 \begin{array}{ccc|ccc|c}
 &     & & &   & & 0\\
 &G_{1}& & &G_{2}&   &  \vdots\\
 &     & & &   & & 0\\
 \hline
 &x^{(1)}&& &0\cdots0&&1
 \end{array}
\right)  \left(
 \begin{array}{ccc|ccc|c}
 &     & & &   & & 0\\
 &G_{1}& & &G_{2}&   &  \vdots\\
 &     & & &   & & 0\\
 \hline
 &x^{(1)}&& &0\cdots0&&1
 \end{array}
\right)^{\dag}\\
 &=\left(
 \begin{array}{ccc|ccc}
 &     & & &   & \\
 &G_{1} G_{1}^{\dag}+G_{2}G_{2}^{\dag}&  &G_{1} {x^{(1)}}^{\dag}&   &  \\
 &     & & &   & \\
 \hline
 &x^{(1)} G_{1}^{\dag}&&x^{(1)}{x^{(1)}}^{\dag}+1 &&
 \end{array}
\right)
  \end{split}
\end{equation*}
  \end{small}
and
 \begin{small}
 \begin{equation*}
 \begin{split}
G^{''}{G^{''}}^\dag&=\left(
 \begin{array}{ccc|ccc|cc}
 &     & & &   & & 0&0\\
 &G_{1}& & &G_{2}&   &  \vdots&\vdots\\
 &     & & &   & & 0&0\\
 \hline
 &x^{(1)}&& &0\cdots0&&1&0\\
\hline
  &0\cdots0&& &x^{(2)}&&0&1\\
 \end{array}
\right) \left(
 \begin{array}{ccc|ccc|cc}
 &     & & &   & & 0&0\\
 &G_{1}& & &G_{2}&   &  \vdots&\vdots\\
 &     & & &   & & 0&0\\
 \hline
 &x^{(1)}&& &0\cdots0&&1&0\\
\hline
  &0\cdots0&& &x^{(2)}&&0&1\\
 \end{array}
\right)^{\dag}\\
 &=\left(
 \begin{array}{ccc|ccc|c}
 &     & & &   & \\
 &G_{1} G_{1}^{\dag}+G_{2}G_{2}^{\dag}&  &G_{1} {x^{(1)}}^{\dag}&   &  &G_{2} {x^{(2)}}^{\dag}\\
 &     & & &   & \\
 \hline
 &x^{(1)} G_{1}^{\dag}&&x^{(1)}{x^{(1)}}^{\dag}+1 &&&0
 \\
  \hline
 &x^{(2)} G_{2}^{\dag}&&0 &&&x^{(2)}{x^{(2)}}^{\dag}+1
 \end{array}
\right).
  \end{split}
\end{equation*}
 \end{small}
 Since the QC code
$\mathscr{C}_{q^2}(f,g)$ is Hermitian self-orthogonal, then
$GG^{\dag}=G_{1} G_{1}^{\dag}+G_{2}G_{2}^{\dag}=\textbf{0}$, where
\textbf{0} is the zero matrix. If there exists a codeword
$x^{(i)}\in \mathscr{C}^{\bot_{h}}_{i} (i=1,2)$ such that $\langle
x^{(i)},x^{(i)}\rangle _h=p-1$, then it is easy to see that
$G^{'}{G^{'}}^\dag$ and $G^{''}{G^{''}}^\dag$ are both zero
matrices. It is equivalent to say that linear codes
$\mathscr{C}^{'}$ and $\mathscr{C}^{''}$ are both Hermitian
self-orthogonal. Further, as every $d^{\bot_{h}}-1$ columns of $G$
are linearly independent, then every $d^{\bot_{h}}-1$ columns of
$G^{'}$ and $G^{''}$ are obviously linearly independent. It follows
that $d^{\bot_{h}}\leq d({\mathscr{C}^{'\bot_{h}}}) \leq
d^{\bot_{h}}+1$ and $d^{\bot_{h}}\leq d({\mathscr{C}^{''\bot_{h}}})
\leq d^{\bot_{h}}+2$, where $d^{\bot_{h}}$ denotes the Hermitian
dual distance of $\mathscr{C}_{q^2}(f,g)$.
\end{proof}
\par
By Theorem 3 and Proposition 1, we have the following result
directly.
\begin{theorem}
With the above notations, let $\mathscr{C}_{q^2}(f,g)$ be a
self-orthogonal QC code $\mathscr{C}_{q^2}(f,g)$ with respect to the
Hermitian inner product. Then it provides two QECCs with parameters
$[[2n+1,2{\rm deg}(g(x))-1, d({\mathscr{C}^{'\bot_{h}}})]]_q$ and
$[[2n+2,2{\rm deg}(g(x))-2, d({\mathscr{C}^{''\bot_{h}}})]]_q$,
respectively. Moreover, $d^{\bot_{h}}\leq
d({\mathscr{C}^{'\bot_{h}}}) \leq d^{\bot_{h}}+1$ and
$d^{\bot_{h}}\leq d({\mathscr{C}^{''\bot_{h}}}) \leq
d^{\bot_{h}}+2$, where $d^{\bot_{h}}$ denotes the Hermitian dual
distance of $\mathscr{C}_{q^2}(f,g)$.
\end{theorem}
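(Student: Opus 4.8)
The plan is to apply the Hermitian self-orthogonal construction of Theorem 3 to the two codes produced by Proposition 1, so the argument reduces to a bookkeeping computation of parameters. First I would invoke Proposition 1 to record that $\mathscr{C}^{'}$ is a Hermitian self-orthogonal linear code of length $2n+1$ and dimension $n-\deg(g(x))+1$, and that $\mathscr{C}^{''}$ is a Hermitian self-orthogonal linear code of length $2n+2$ and dimension $n-\deg(g(x))+2$. These are precisely the hypotheses required to feed each code into Theorem 3.

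Next I would apply Theorem 3 to $\mathscr{C}^{'}$ with length $N=2n+1$ and dimension $K=n-\deg(g(x))+1$. The resulting quantum dimension is
$$N-2K=(2n+1)-2\bigl(n-\deg(g(x))+1\bigr)=2\deg(g(x))-1,$$
matching the claimed parameter. For $\mathscr{C}^{''}$ the same substitution with $N=2n+2$ and $K=n-\deg(g(x))+2$ yields
$$N-2K=(2n+2)-2\bigl(n-\deg(g(x))+2\bigr)=2\deg(g(x))-2,$$
again as stated.

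For the minimum distance, the one point requiring care is that Theorem 3 asks for a value $d$ such that no vector of weight less than $d$ lies in the set-theoretic difference $\mathscr{C}^{'\bot_h}\setminus\mathscr{C}^{'}$. I would observe that the choice $d=d(\mathscr{C}^{'\bot_h})$ always satisfies this: by definition $d(\mathscr{C}^{'\bot_h})$ is the minimum weight among all nonzero codewords of $\mathscr{C}^{'\bot_h}$, so a fortiori no codeword of smaller weight can lie in the subset $\mathscr{C}^{'\bot_h}\setminus\mathscr{C}^{'}$. Hence Theorem 3 produces a QECC of distance $d(\mathscr{C}^{'\bot_h})$, and the identical reasoning gives distance $d(\mathscr{C}^{''\bot_h})$ for the second code.

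Finally, the two-sided estimates $d^{\bot_h}\leq d(\mathscr{C}^{'\bot_h})\leq d^{\bot_h}+1$ and $d^{\bot_h}\leq d(\mathscr{C}^{''\bot_h})\leq d^{\bot_h}+2$ are exactly the conclusions already established in Proposition 1, so they carry over verbatim. Since nothing beyond parameter arithmetic is involved, there is no genuine obstacle here; the only step demanding a moment's attention is the distance identification above, namely confirming that the dual distance of each extended code is a legitimate choice for the parameter $d$ appearing in Theorem 3.
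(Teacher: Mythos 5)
Your proposal is correct and follows the same route the paper intends: the paper gives no separate argument for this theorem, stating only that it follows directly from Theorem 3 and Proposition 1, which is exactly the parameter bookkeeping you carry out. Your extra remark that $d=d(\mathscr{C}^{'\bot_{h}})$ trivially satisfies the weight condition of Theorem 3 (since every nonzero codeword of $\mathscr{C}^{'\bot_{h}}$, hence every vector of $\mathscr{C}^{'\bot_{h}}\setminus\mathscr{C}^{'}$, has weight at least $d(\mathscr{C}^{'\bot_{h}})$) is a valid justification of the one step the paper leaves implicit.
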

\par
In the following, we will construct some good QECCs over small
finite fields $\mathbb{F}_{2}$ and $\mathbb{F}_{3}$ according to
Theorem 4. We compute it
 by the algebra system Magma \cite{Bosma}. Let
$\omega$ and $\xi$ be primitive elements of $\mathbb{F}_{4}$ and
$\mathbb{F}_{9}$. For simplicity, elements $0,1,\omega,\omega^2$ in
$\mathbb{F}_{4}$ and
$0,1,\xi,\xi^{2},\xi^{3},\xi^{4},\xi^{5},\xi^{6},\xi^{7}$ in
$\mathbb{F}_{9}$ are represented by 0,1,2,3 and 0,1,2,3,4,5,6,7,8,
respectively.

\begin{example}
Assume that $q=2$ and $n=15$. Consider the following polynomials in
$\mathbb{F}_{4}[x]/\langle x^{15}-1\rangle$,
\begin{equation*}
\begin{split}
g(x)=x^9 + 3x^8 + x^7 + x^5 + 3x^4 + 2 x^2 + 2 x + 1,
\end{split}
\end{equation*}
$$f(x)=2 x^3+2 x^2+2 x+1.$$
Since $g(x)\mid (x^{15}-1)$ and ${g}^{\bot q}(x) \mid g(x)$, then by
Lemma 2, the QC code $\mathscr{C}_{4}(f,g)$ is Hermitian
self-orthogonal. Select a codeword $x^{(1)}=(1,3, 2, 1, 3,
 2, 1, 3, 2, 1, 3,
\\2, 1, 3, 2)\in \mathscr{C}^{\bot_{h}}_{1}$. According to
Proposition 1 (i), we can construct a Hermitian self-orthogonal
linear code $\mathscr{C}^{'}$ with generator matrix
\begin{equation*}
\begin{tiny}
 G^{'}=\left(
\begin{array}{ccccccccccccccccccccccccccccccc}
 1 &2& 2& 0 &3 &1& 0& 1 &3& 1& 0 &0& 0& 0& 0&1& 0& 3 &2 &3 &3 &3 &2 &3& 2& 1& 3& 2 &0 &0& 0 \\
 0 &1 &2& 2 &0& 3& 1 &0 &1& 3& 1& 0& 0& 0& 0&0& 1& 0& 3 &2 &3 &3 &3 &2 &3& 2& 1& 3& 2 &0 &0 \\
 0 &0 &1& 2& 2 &0& 3& 1& 0& 1& 3& 1& 0& 0& 0&0& 0& 1& 0& 3 &2 &3 &3 &3 &2 &3& 2& 1& 3& 2 &0  \\
 0 &0& 0& 1& 2& 2 &0& 3&1& 0& 1& 3& 1& 0& 0&2& 0& 0& 1& 0& 3 &2 &3 &3 &3 &2&3& 2& 1& 3 &0  \\
 0 &0& 0& 0& 1&2& 2& 0& 3& 1& 0& 1& 3& 1& 0 &3 &2& 0& 0& 1& 0& 3 &2 &3 &3&3 &2 &3& 2& 1& 0\\
 0 &0&0& 0& 0& 1&2& 2& 0& 3& 1& 0& 1& 3& 1  &1&3 &2& 0& 0& 1& 0& 3 &2 &3&3 &3 &2 &3& 2&  0\\
 1 &3& 2& 1& 3& 2& 1& 3& 2& 1& 3& 2& 1& 3& 2& 0& 0& 0 &0& 0&0 &0&0 &0 &0 &0 &0& 0& 0& 0&1\\
\end{array}
\right).
\end{tiny}
\end{equation*}
Using algebraic software Magma \cite{Bosma}, we have that
$\mathscr{C}^{'}$ is a $[31,7,16]_{4}$ linear code
 and its weight enumerator is
\begin{equation*}
\begin{split}
0^1 16^3 18^{630}20^{2520}22^{3900}24^{5400}26^{3150}28^{780}.
\end{split}
\end{equation*}
By the MacWilliams equation, we can gain that the weight enumerator
of $\mathscr{C}^{'\bot_{h}}$ is
\begin{equation*}
\begin{split}
&0^15^{2709}6^{33789}7^{352635}8^{3146895}9^{24208470}10^{
159955686}11^{915334434}12^{4577489490}\\
&13^{20070644055}14^{77414126895}15^{263209977249}16^{789626267391}17^{2090205270180}\\
&18^{4877070505860}19^{10011021610380}20^{18019505816172}21^{28316806886643}22^{38613305033355}\\
&23^{45329278307085}24^{45328597815825}25^{38076647339430}26^{26360313433398}27^{14644850099250}\\
&28^{6276277886370}29^{1947832175745}30^{389563102377}31^{37699888887}.\\
\end{split}
\end{equation*}
Hence, the Hermitian dual distance of $\mathscr{C}^{'}$ is 5.  By
Theorem 4, an optimal QECC with parameters $[[31,17,5]]_{2}$ will be
provided. Moreover, according to the propagation rule in
\cite{Calderbank}, there will also exists a QECC with parameters
$[[32,17,5]]_{2}$. By comparison with Grassl's code tables
\cite{Grassl1}, we note that our codes are better than the
best-known $[[31,17,4]]_{2}$ and $[[32,17,4]]_{2}$ QECCs,
respectively. Hence, our QECCs break the current records.
\end{example}

\begin{example}
Write $q=3$ and $n=10$.  Choose the following polynomials in
$\mathbb{F}_{9}[x]/\langle x^{10}-1\rangle$,
\begin{equation*}
\begin{split}
g(x)=x^6 + 7x^5 + 5x^4 + x^2 + 3x +5,~~ f(x)=x^3+2x^2+5x+1.
\end{split}
\end{equation*}
By Lemma 2, the QC code $\mathscr{C}_{9}(f,g)$ is Hermitian
self-orthogonal. Select codewords $x^{(1)}=( 1, 1, 8, 2, 1, 2, 2, 6,
0, 1) \in\mathscr{C}^{\bot_{h}}_{1}$ and $x^{(2)}=(1 , 7, 3, 8, 5,
7, 7, 0, 3, 2)\in\mathscr{C}^{\bot_{h}}_{2}$. By Proposition 1 (ii),
a Hermitian self-orthogonal linear code $\mathscr{C}^{''}$ can be
constructed, whose generator matrix is given as follows
\begin{equation*}
 \begin{footnotesize}
 G^{''}=\left(
\begin{array}{ccccc ccccc ccccc ccccc cc}
5& 3& 1& 0& 5& 7& 1& 0& 0& 0& 5& 8& 2& 7& 6& 4& 5& 2& 5& 1& 0& 0\\
 0& 5& 3& 1& 0& 5& 7& 1& 0& 0& 1& 5& 8& 2& 7& 6& 4& 5& 2& 5& 0& 0\\
 0& 0& 5& 3& 1& 0& 5&7 &1 &0 &5 &1 &5 &8 &2 &7 &6 &4 &5 &2& 0& 0\\
 0& 0& 0& 5& 3& 1& 0& 5& 7& 1&2& 5& 1& 5& 8& 2& 7& 6& 4& 5& 0& 0\\
1 &1& 8& 2& 1& 2& 2& 6& 0& 1& 0& 0& 0& 0&
0 &0& 0& 0& 0& 0& 1& 0\\
 0& 0& 0& 0& 0& 0& 0& 0& 0& 0& 1& 7& 3& 8& 5& 7& 7& 0&
3 &2& 0& 1\\
\end{array}
\right).
 \end{footnotesize}
\end{equation*}
Calculate that $\mathscr{C}^{''}$ is a $[22,6,10]_{9}$ linear code
and its weight enumerator is
\begin{equation*}
\begin{split}
0^1 10^{16} 12^8 13^{80}14^{624}15^{3376}16^{11192}17^{
32856}18^{71520}19^{118336}20^{142128}21^{112664}22^{38640}.
\end{split}
\end{equation*}
By the MacWilliams equation, we obtain that $\mathscr{C}^{'}$ is a
$[22,16,5]_{9}$ linear code. One can check that it is the best-known
classical code according to Grassl's code tables \cite{Grassl1}. By
Theorem 4, there exists a QECC with parameters $[[22,10,5]]_{3}$. To
testify the superiority of the code, we find that our code exceeds
the quantum GV bounds. In \cite{Ezerman}, the authors gave a QECC
with parameters $[[22,8,5]]_{3}$. Obviously, the code rate of our
QECC is higher.
\end{example}
\begin{example}
Now set $q=2$ and $n=51$. Define polynomials
\begin{equation*}
\begin{split}
g(x)=&x^{35}+2 x^{34} +3x^{33} + 3x^{32} + 3x^{30} + 2 x^{26} +
x^{25} + 2 x^{24} + x^{23} +
 2 x^{22}\\
 &+2 x^{20} + 3x^{18} + 2 x^{17} + 3x^{15} + 2 x^{13} + 2 x^{12} + x^{11} +
 x^{10} +3x^9\\
 &+2 x^5 + x^3 + 2 x^2 + 2 x + 2,\\
\end{split}
\end{equation*}
$$f(x)=2 x^{15}+ x^{14}+x^{13}+ x^{12}+x^{11}+2 x^{10}
 + x^9+x^8 +x^7 + x^6 + x^5 +x^4 + x^3 + x^2 + x + 1$$
 in $\mathbb{F}_{4}[x]/\langle x^{51}-1\rangle$. By Lemma 2,
 the QC code $\mathscr{C}_{9}(f,g)$ is Hermitian self-orthogonal.
Choose a codeword $x^{(1)}=(1, 2, 0, 0, 0, 3, 2,0, 2, 3, 3, 0, 2, 3,
1, 3, 0,2, \\2, 3, 2, 3, 0, 3, 1, 3, 2, 1, 2, 2, 3, 0, 2, 1, 3, 3,
1, 0, 2, 0, 3, 0, 1, 2, 2, 0, 1, 3, 3, 0, 0)\in
\mathscr{C}^{\bot_{h}}_{1}$. By Proposition 4 (i), we obtain a
Hermitian self-orthogonal $[103,17,38]_{4}$ linear code, whose
weight enumerator is given as follows
\begin{equation*}
\begin{split}
&0^1 38^3 48^6 50^{96}52^{1971}54^{14862}56^{92127}58^{
551322}60^{2784441}62^{11959407}64^{43955487}66^{136538139}\\
&68^{359079711}70^{
796085910}72^{1480101177}74^{2293531833}76^{2941096230}78^{
3093630249}80^{2642685339}\\
& 82^{1811850639}84^{982791951}86^{
413828565}88^{132236040}90^{31182537}92^{5228613}94^{597612}96^{43263}\\
&98^{1626}100^{27}.
\end{split}
\end{equation*}
By the MacWilliams equation, the Hermitian dual distance of
$\mathscr{C}^{'}$ is 7. By Theorem 3, a QECC with parameters
$[[103,69,7]]_{2}$ can be constructed, which surpasses the
best-known $[[103,69,6]]_2$ QECC at now \cite{Grassl1}.
\end{example}
\par
In the following, four tables will be given to illustrate that many
good QECCs can be obtained by our extended QC constructions. Here we
just give some good QECCs over small finite fields $\mathbb{F}_{2}$
and $\mathbb{F}_{3}$ via the extended construction provided in
Proposition 1 (i). Tables 1 and 3 contain some Hermitian
self-orthogonal linear codes $\mathscr{C}^{'}$ over $\mathbb{F}_{4}$
and $\mathbb{F}_{9}$. These codes are used to construct good binary
and ternary QECCs in Tables 2 and 4, respectively. In Table 2, some
the best-known or optimal binary QECCs with length less than or
equal to 127 are provided, most of which have different weight
distributions with the best-known QECCs in Grassl's code
tables\cite{Grassl1}. In Table 4, we construct some good ternary
QECCs, which all exceed the quantum GV bounds and have higher code
rate than QECCs available in \cite{Edel}. For simplicity, we write
coefficients of polynomials in ascending order to denote
polynomials. The exponents of the elements indicate the number of
the consecutive same elements. For example, the polynomial
$1+\xi^{4}x^2+x^3+x^4$ over $\mathbb{F}_{9}$ is represented by
$1051^2$.

\begin{table}
   \caption{Hermitian self-orthogonal extended QC codes $\mathscr{C}^{'}$ over $\mathbb{F}_4$.}
   \begin{center}
   \begin{spacing}{1.6}
   \begin{scriptsize}
   \begin{tabular}{ccc}

  \hline  \hline
   $n$ &$f(x)$, $g(x)$ and $x^{(1)}$ over $\mathbb{F}_{4}$ & Codes $\mathscr{C}^{'}$\\
   \hline
7& $12$,\quad$101^3$,\quad$(13)^23^21$   &$[15,4,8]_{4}$ \\
    \hline

$17$&$3^31$,\quad$132^20^22^231$,\quad$13^210^42^230(21)^20$ &$[35,9,14]_{4}$\\
\hline
$23$&$1^623$,\quad$10(100)^21^5$,\quad$10232^20^3313020^232^20^33$&$[47,12,20]_{4}$\\
\hline $29$&$1^9212$,\quad$12(331)^2(133)^221$,\quad$1021^30^410
3^2203^201^2(21)^2131^2$ & $[59,15,24]_{4}$\\
\hline
$31$&$1^73$,\quad$1^701^2(01)^210^2101^3$,\quad$10^2132301^2013101^223^2030201^2313^2$ & $[63,11,24]_{4}$\\
\hline
$31$&$1^{12}212$,\quad$10^31^40^410^21^2$,\quad$(10)^2020^213^20^213031^20212012^21321$ &$[63,16,22]_{4}$\\
\hline
$37$&$1^{14}2013$,\quad$12^2020132^4310202^21$, &$[75,19,26]_{4}$\\
 & $(10^2)^23^22^3310^21^22^21310^23230102120102$&\\
\hline
$39$&$1^{14}3203$,\quad$121^3302^312^21302^213^21$, & $[79,19,32]_{4}$\\
& $1^203^32^43230312313(23)^203(20)^23^21210^33$&\\
\hline
$41$& $1^72^21$,\quad$131210(31)^21012^23^22^2101^2313012131$, &$[83,11,30]_{4}$\\
 & $130^3232012^212012^23^220(3101)^221203^203020$&\\
\hline
$55$&$1^{10}2$,\quad$130132^230203^22131(0^22)^23^310^210210(31)^232^301$, & $[111,13,46]_{4}$\\
 & $(13)^221^30^21302^332102^21^2202(30)^32^43^203^2210^320232^203$&\\
\hline
$63$&$1^{12}212$,\quad$31231^232030^33^2012^31213^202302^23213(10)^231^20^21^2210121$, & $[127,13,52]_{4}$\\
& $1^20232^3321310^22131210^2201^3030121^42^412^401(32)^2013^201^202^3$&\\
\hline
$63$&$1^{12}212$,\quad$31^532^30^2231^23203^21^330^42^212^201203123^21^23231$, &$[127,16,50]_{4}$\\
 & $10^2303(20)^2121^32^3121^202^31^23123(12)^323213231323^202^3(30)^212310$&\\
\hline \hline
   \end{tabular}
   \end{scriptsize}
   \end{spacing}
\end{center}
\end{table}

\begin{table}
   \caption{The best-known binary QECCs from extended QC codes $\mathscr{C}^{'}$.}
   \begin{center}
   \begin{spacing}{1.6}
   \begin{scriptsize}
   \begin{tabular}{ccc}

  \hline\hline
  $\mathscr{C}_{4}(f,g)$& $\mathscr{C}_{4}^{\bot_{h}}(f,g)$& Our QECCs\\
   \hline
$[15, 4, 8]_{4}$&$[15, 11, 3]_{4}$&$[[15, 7, 3]]_{2}$ \\
$[35, 9, 14]_{4}$& $[35, 26, 5]_{4}$ &$[[35, 17, 5 ]]_{2}$\\
$[47, 12, 20]_{4}$& $[47, 35, 6 ]_{4}$ &$[[47, 23, 6 ]]_{2}$\\
$[59, 15, 24 ]_{4}$& $[59, 44, 7 ]_{4}$ &$[[59, 29, 7 ]]_{2}$\\
$[63, 11, 24]_{4}$& $[63, 52, 5 ]_{4}$ &$[[63, 41, 5 ]]_{2}$\\
$[63, 16, 22]_{4}$& $[63,47,7]_{4}$ &$[[63,31,7]]_{2}$\\
$[75, 19, 26 ]_{4}$& $[75, 56, 8]_{4}$ &$[[75, 37, 8 ]]_{2}$\\
$[79, 19, 32 ]_{4}$& $[79,60,8]_{4}$ &$[[79,41,8]]_{2}$\\
$[83, 11, 30]_{4}$& $[83,72,5]_{4}$ &$[[83,61,5]]_{2}$\\
$[111, 13, 46 ]_{4}$& $[111,98,5]_{4}$ &$[[111,85,5]]_{2}$\\
$[127, 13, 52]_{4}$& $[127,114,5]_{4}$ &$[[127,101,5]]_{2}$\\
$[127, 16, 50 ]_{4}$& $[127,111,6]_{4}$ &$[[127,95,6]]_{2}$\\
\hline\hline
   \end{tabular}
   \end{scriptsize}
   \end{spacing}
\end{center}
\end{table}

\begin{table}
   \caption{Hermitian self-orthogonal extended QC codes $\mathscr{C}^{'}$ over $\mathbb{F}_9$.}
   \begin{center}
   \begin{spacing}{1.6}
   \begin{scriptsize}
   \begin{tabular}{ccc}
\hline  \hline
   $n$ &$f(x),g(x),x^{(1)}$ over $\mathbb{F}_{9}$ &Codes $\mathscr{C}^{'}$\\
   \hline
11&$12486$,\quad$15^3101$,\quad$126245487^3$ &$[23,6,12]_{9}$\\
\hline
$17$&$1^45121$,\quad$5215371561$,\quad$1^23680^21726823472$ &$[35,9,16]_{9}$\\
\hline
$23$&$1^8212$,\quad$150(51)^2(10)^201$, &$[47,12,23]_{9}$\\
&$18452373054381^26383157^2$&\\
 \hline
$35$&$1^621$,\quad$5208270^2(75)^2540276513148^2731$, & $[71,9,26]_{9}$\\
 & $1050^22676308^2316^202384^20^373487^280^2$&\\
\hline
$41$&$1^206$,\quad$583540135073452^26126526^218730175081741$, & $[83,5,39]_{9}$\\
 & $1743516718^230141^2786273^281(28)^2245^28631^242$&\\
 \hline
$65$&$1^921$,\quad$173681^2057206^22847641684587643^2746825^280^21340275868531$, & $[131,12,59]_{9}$\\
 & $17361^28^225412708058626127^2805(26)^212^27080(08)^2128642857381682^264214$&\\
\hline \hline
   \end{tabular}
   \end{scriptsize}
   \end{spacing}
\end{center}
\end{table}

\begin{table}
   \caption{Ternary QECCs from extended QC codes $\mathscr{C}^{'}$.}
   \begin{center}
   \begin{spacing}{1.6}
   \begin{scriptsize}
   \begin{tabular}{cccccc}

  \hline\hline
  $\mathscr{C}_{9}(f,g)$& $\mathscr{C}_{9}^{\bot_{h}}(f,g)$&Our QECCs &Rate &QECCs \cite{Edel}&Rate \\
   \hline
$[23,6,12]_{9}$& $[23,17,5]_{9}$&$[[23,11,5]]_{3}$&0.478&$[[21,7,5]]_{3}$&0.333\\
$[35,9,16]_{9}$& $[35,26,6]_{9}$ &$[[35,17,6]]_{3}$&0.486&$[[26,11,6]]_{3}$&0.423\\
$[47,12,23]_{9}$& $[47,35,7]_{9}$ &$[[47,23,7]]_{3}$&0.489&$[[52,25,7]]_{3}$&0.481\\
$[71,9,26]_{9}$& $[71,62,5]_{9}$ &$[[71,53,5]]_{3}$&0.746&$[[65,43,5]]_{3}$&0.662\\
$[83,5,39]_{9}$& $[83,78,4]_{9}$ &$[[83,73,4]]_{3}$&0.880&$[[81,71,4]]_{3}$&0.877\\
$[131,12,59]_{9}$& $[131,119,6]_{9}$ &$[[131,107,6]]_{3}$&0.817&$[[140,106,6]]_{3}$&0.757\\
\hline\hline
   \end{tabular}
   \end{scriptsize}
   \end{spacing}
\end{center}
\end{table}

\section{Quasi-cyclic constructions of entanglement-assistant quantum codes}
Entanglement-assistant quantum error-correcting codes (EAQECCs) can
be regarded as generalized QECCs, which can break the
self-orthogonal conditions. An $[[n,k,d;c]]_{q}$ EAQECC encodes $k$
logical qubits into $n$ physical qubits with the help of $c$ copies
of entangled ebits. In particular, if $c=0$, then the EAQECC is a
standard stabilizer QECC. Similar to the QECCs, EAQECCs can also be
constructed by classical linear codes in the following theorem.
\begin{theorem}(\cite{Wilde})
If $\mathscr{C}$ is an $[n,k,d]_{q^{2}}$ classical code over
$\mathbb{F}_{q^{2}}$ with parity check matrix $H$, then
$\mathscr{C}^{\bot_{h}}$ stabilizes an EAQECC with parameters
$[[n,2k-n+c,d;c]]_{q}$, where $c={\rm Rank}(HH^{\dag})$ is the
number of entangled ebits required.
\end{theorem}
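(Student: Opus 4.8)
The plan is to reduce the statement to the entanglement-assisted stabilizer formalism of Brun, Devetak and Hsieh, recast in the $\mathbb{F}_{q^2}$-Hermitian language through the trace-symplectic correspondence. First I would fix an $\mathbb{F}_q$-basis $\{1,\omega\}$ of $\mathbb{F}_{q^2}$ and attach to every vector $v\in\mathbb{F}_{q^2}^{n}$ a generalized Pauli operator on $(\mathbb{C}^{q})^{\otimes n}$, splitting $v=a+\omega b$ into its $X$-type part $a$ and $Z$-type part $b$ with $a,b\in\mathbb{F}_q^{n}$. The key dictionary is that two such Pauli operators commute if and only if the trace-symplectic form of the underlying vectors vanishes, and that this form is obtained, under the Hermitian identification, by applying the trace $\mathrm{tr}_{q^{2}/q}$ to the Hermitian inner product $\langle\cdot,\cdot\rangle_{h}$. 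Under this dictionary the rows of the parity-check matrix $H$ form a candidate generating set of $\mathscr{C}^{\perp_h}$, of $\mathbb{F}_{q^2}$-dimension $n-k$, whose pairwise commutation defect is recorded precisely by the Gram matrix $HH^{\dag}$.

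Next I would pass to the base field $\mathbb{F}_q$, where the Pauli and symplectic structure actually live. Here the stabilizer is the image of $\mathscr{C}^{\perp_h}$, an $\mathbb{F}_q$-subspace of dimension $2(n-k)$ spanned by the rows of $H$ together with their $\omega$-multiples. I would then run a symplectic Gram--Schmidt (Clifford) reduction to bring this span into the standard entanglement-assisted canonical form, namely a maximal isotropic commuting block together with $c$ mutually anticommuting hyperbolic pairs. The technical heart is the identity $c=\mathrm{Rank}(HH^{\dag})$: I would show that the $\mathbb{F}_q$-symplectic rank of the span equals $2\,\mathrm{Rank}_{\mathbb{F}_{q^2}}(HH^{\dag})$, so that exactly $c$ hyperbolic pairs survive and exactly $c$ pre-shared maximally entangled ebits are needed to repair them on the enlarged sender--receiver system.

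With the canonical form in hand, the Brun--Devetak--Hsieh construction yields a genuine commuting stabilizer on the enlarged Hilbert space, and the dimension count becomes bookkeeping: the stabilizer is an $\mathbb{F}_q$-subspace of dimension $2(n-k)$ whose symplectic form has rank $2c$, so the standard entanglement-assisted dimension formula returns $n-2(n-k)+c=2k-n+c$ logical qudits over the $n$ physical ones. For the distance I would argue that an undetectable Pauli error corresponds to a nonzero element of the normalizer lying outside the stabilizer, hence to a nonzero codeword of $\mathscr{C}$, whose Hamming weight is at least the minimum distance $d$; therefore every error of weight at most $d-1$ is detected and the resulting EAQECC has distance $d$.

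The step I expect to be the main obstacle is the ebit count $c=\mathrm{Rank}(HH^{\dag})$, that is, converting the $\mathbb{F}_{q^2}$-rank of the Hermitian Gram matrix into the $\mathbb{F}_q$-symplectic rank of the lifted span and verifying they agree up to the factor $2$. A secondary subtlety is the degenerate regime, where the rows of $H$ and their $\omega$-multiples are not $\mathbb{F}_q$-independent, or where $\mathscr{C}^{\perp_h}\not\subseteq\mathscr{C}$ so that the undetectable errors must be read off from the correct coset; I would treat these by a direct rank computation rather than the generic count.
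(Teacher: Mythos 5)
The paper itself offers no proof of this statement: Theorem 5 is imported verbatim from \cite{Wilde}, so the only benchmark is the standard entanglement-assisted stabilizer argument of \cite{Brun,Wilde} (in its $q$-ary trace-symplectic form, cf. \cite{Ketkar}), and your sketch reconstructs that argument faithfully --- lifting the $n-k$ rows of $H$ together with their $\omega$-multiples to a $2(n-k)$-dimensional $\mathbb{F}_q$-span, symplectic Gram--Schmidt into $c$ hyperbolic pairs repaired by ebits, the crux identity $c={\rm Rank}(HH^{\dag})$, the bookkeeping $n+c-2(n-k)=2k-n+c$, and undetectable errors being precisely the nonzero codewords of $(\mathscr{C}^{\perp_h})^{\perp_h}=\mathscr{C}$, hence of weight at least $d$.

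One imprecision in your dictionary should be repaired before the rank step can go through: ${\rm tr}_{q^2/q}(\langle u,v\rangle_h)$ is \emph{not} the symplectic form --- since ${\rm tr}_{q^2/q}(x^q)={\rm tr}_{q^2/q}(x)$, that pairing is symmetric rather than alternating. The commutation pairing is the trace-alternating form, e.g. $\bigl(\langle u,v\rangle_h-\langle v,u\rangle_h\bigr)/(\omega^{q}-\omega)$, which is $\mathbb{F}_q$-valued because both numerator and denominator are negated by the map $x\mapsto x^q$; the fact your ``main obstacle'' actually rests on is that $\langle u,v\rangle_h=0$ if and only if both $u$ and $\omega u$ are trace-alternating-orthogonal to $v$, so that on the $\omega$-closed lifted span the radical of the symplectic form is exactly the lift of the kernel of $HH^{\dag}$, giving symplectic rank $2\,{\rm Rank}_{\mathbb{F}_{q^2}}(HH^{\dag})$ and hence your factor of two. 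With that correction the plan is sound and matches the cited source; your closing remarks on independence of the lifted generators and on the hull-degenerate coset are the right residual checks, and both resolve in your favor (full rank of $H$ gives $\mathbb{F}_q$-independence, and hull elements only produce degenerate, hence harmless, errors).
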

\par
In 2018, Guenda et al. \cite{Guenda} established a relation between
the required number of entangled ebits and the dimension of the
Hermitian hull of a classical linear code.
\begin{theorem}(\cite{Guenda})
Let $\mathscr{C}$ be a classical $[n,k,d]_{q^{2}}$ code with parity
check matrix $H$ and  generator matrix $G$. Then ${\rm
Rank}(HH^{\dag})$ and ${\rm Rank}(GG^{\dag})$ are independent of $H$
and $G$ so that
$${\rm Rank}(HH^{\dag})=
n-k-{\rm dim}({\rm Hull}_h(\mathscr{C}))=n-k-{\rm dim}({\rm
Hull}_h(\mathscr{C}^{\bot}_h )),$$ and
$${\rm Rank}(GG^{\dag})=
k-{\rm dim}({\rm Hull}_h(\mathscr{C}))=k-{\rm dim}({\rm
Hull}_h(\mathscr{C}^{\bot}_h )),$$ where ${\rm
Hull}_h(\mathscr{C})={\rm
Hull}_h(\mathscr{C}^{\bot_{h}})=\mathscr{C}\cap\mathscr{C}^{\bot_{h}}$.
Obviously, $c={\rm Rank}(HH^{\dag})={\rm Rank}(GG^{\dag})+n-2k$.
\end{theorem}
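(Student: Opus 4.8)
The plan is to reduce the entire statement to a single Gram-matrix computation, namely the identity ${\rm Rank}(GG^{\dag})=k-{\rm dim}({\rm Hull}_h(\mathscr{C}))$, and then obtain the $H$-formula by duality. Before anything else I would settle the independence claim, since it makes ``${\rm Rank}(GG^{\dag})$'' and ``${\rm Rank}(HH^{\dag})$'' well-defined quantities. Any two generator matrices of $\mathscr{C}$ are related by $G'=SG$ with $S\in{\rm GL}_{k}(\mathbb{F}_{q^2})$; then $G'{G'}^{\dag}=S(GG^{\dag})S^{\dag}$, and since both $S$ and $S^{\dag}$ are invertible (note ${\rm det}(S^{\dag})=({\rm det}\,S)^{q}\neq 0$), the rank is unchanged. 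The same argument applies verbatim to the parity-check matrices, so the two ranks depend only on $\mathscr{C}$.

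For the central identity I would parametrize codewords as $\textbf{c}=\textbf{a}G$ with $\textbf{a}\in\mathbb{F}_{q^2}^{k}$, the assignment $\textbf{a}\mapsto\textbf{a}G$ being injective because the rows of $G$ are independent. Writing $\textbf{g}_i$ for the rows of $G$, one checks that $(G\textbf{c}^{\dag})_i=\sum_\ell c_\ell^{q}(\textbf{g}_i)_\ell=\langle\textbf{c},\textbf{g}_i\rangle_h$, so $\textbf{c}\in\mathscr{C}^{\bot_h}$ holds exactly when $G\textbf{c}^{\dag}=\textbf{0}$ (using $\langle\textbf{c},\textbf{g}_i\rangle_h=0\iff\langle\textbf{g}_i,\textbf{c}\rangle_h=0$, which follows from $\langle\textbf{v},\textbf{u}\rangle_h=\langle\textbf{u},\textbf{v}\rangle_h^{q}$). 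Substituting $\textbf{c}=\textbf{a}G$ gives $G\textbf{c}^{\dag}=GG^{\dag}\textbf{a}^{\dag}$, so $\textbf{c}=\textbf{a}G\in{\rm Hull}_h(\mathscr{C})=\mathscr{C}\cap\mathscr{C}^{\bot_h}$ precisely when $\textbf{a}^{\dag}\in{\rm ker}(GG^{\dag})$. Because the conjugate-transpose map $\textbf{a}\mapsto\textbf{a}^{\dag}$ is an additive bijection carrying $\mathbb{F}_{q^2}$-subspaces to $\mathbb{F}_{q^2}$-subspaces of equal dimension, the set of admissible $\textbf{a}$ has dimension ${\rm dim}\,{\rm ker}(GG^{\dag})=k-{\rm Rank}(GG^{\dag})$. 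Combined with the injectivity of $\textbf{a}\mapsto\textbf{a}G$ this yields ${\rm dim}({\rm Hull}_h(\mathscr{C}))=k-{\rm Rank}(GG^{\dag})$, which is the $G$-formula.

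The $H$-formula then comes for free by applying the same identity to the code $\mathscr{C}^{\bot_h}$, which has dimension $n-k$ and is generated by the rows of $H$. (Here it is immaterial whether $H$ is taken as the Euclidean or the Hermitian parity-check matrix, since the two differ only by an entrywise $q$-power and ${\rm Rank}(M^{q})={\rm Rank}(M)$ for every matrix $M$.) This gives ${\rm Rank}(HH^{\dag})=(n-k)-{\rm dim}({\rm Hull}_h(\mathscr{C}^{\bot_h}))$. The equality of hulls ${\rm Hull}_h(\mathscr{C})=\mathscr{C}\cap\mathscr{C}^{\bot_h}=\mathscr{C}^{\bot_h}\cap(\mathscr{C}^{\bot_h})^{\bot_h}={\rm Hull}_h(\mathscr{C}^{\bot_h})$ is immediate from the definition, so both displayed formulas hold with the same ${\rm dim}({\rm Hull}_h(\mathscr{C}))$. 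Finally, subtracting the two identities gives ${\rm Rank}(HH^{\dag})-{\rm Rank}(GG^{\dag})=(n-k)-k=n-2k$, which is exactly the relation $c={\rm Rank}(HH^{\dag})={\rm Rank}(GG^{\dag})+n-2k$.

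The step I expect to demand the most care is the sesquilinear bookkeeping in the Gram-matrix identity: one must track the entrywise $q$-powers so that the condition ``$\textbf{a}G\in\mathscr{C}^{\bot_h}$'' genuinely becomes the $\mathbb{F}_{q^2}$-linear kernel condition $GG^{\dag}\textbf{a}^{\dag}=\textbf{0}$, and then confirm that replacing $\textbf{a}$ by $\textbf{a}^{\dag}$ leaves dimensions intact. Once that conjugate-linear passage is handled correctly, the remainder is routine linear algebra over $\mathbb{F}_{q^2}$ together with the self-duality of the Hermitian hull.
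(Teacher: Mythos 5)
Your proof is correct, but there is a preliminary point to make: the paper contains no proof of this statement to compare against. It is quoted verbatim from Guenda, Jitman and Gulliver \cite{Guenda} and used as a black box (e.g.\ in the proof of Theorem 7), so your argument is a genuine self-contained reconstruction rather than a parallel of anything in the text. On its merits, the argument is sound throughout. The well-definedness step is exactly right: $G'=SG$ gives $G'{G'}^{\dag}=S(GG^{\dag})S^{\dag}$ with $S$ and $S^{\dag}$ invertible, and the same for $H$. The central computation correctly identifies ${\rm Hull}_h(\mathscr{C})$ with the image under $\textbf{a}\mapsto\textbf{a}G$ of the preimage of $\ker(GG^{\dag})$ under $\textbf{a}\mapsto\textbf{a}^{\dag}$, and you flagged the one point where care was genuinely needed: that map is only conjugate-linear, but since $\lambda\mapsto\lambda^{q}$ is an automorphism of $\mathbb{F}_{q^2}$ it still carries $\mathbb{F}_{q^2}$-subspaces to $\mathbb{F}_{q^2}$-subspaces of equal dimension, so rank--nullity gives ${\rm dim}({\rm Hull}_h(\mathscr{C}))=k-{\rm Rank}(GG^{\dag})$. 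The duality step is also handled properly, including the parenthetical about parity-check conventions: if $H$ spans $\mathscr{C}^{\perp}$ then $H^{q}$ spans $\mathscr{C}^{\perp_h}=(\mathscr{C}^{\perp})^{q}$, and $H^{q}(H^{q})^{\dag}=H^{q}H^{T}=(HH^{\dag})^{q}$ has the same rank as $HH^{\dag}$ because the Frobenius is an automorphism, so the formula is convention-independent. For context, the original proof in \cite{Guenda} reaches the same conclusion by an adapted-basis Gram-matrix argument (extend a basis of the hull to a basis of $\mathscr{C}$, so the hull vectors contribute zero rows to $GG^{\dag}$ and the complementary block is shown nonsingular); your kernel-counting version is the rank--nullity dual of that and is arguably cleaner, since it never requires choosing a basis adapted to the hull.
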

\par
 For an $[[n,k,d;c]]_{q}$
EAQECC, it is called maximal-entanglement when $c=n-k$.
Refs. \cite{Bowen,Lai1,Lai2,Lu2,Wilde1} have revealed that
maximal-entanglement EAQECCs can both reach the EA-quantum capacity
and EA-hashing bound asymptotically, which can provide higher code
rate and lower SNR (signal to noise ratio). Therefore, it is
worthwhile to exploit how to construct maximal-entanglement EAQECCs
with good performances.
\par
In the following, using the QC code $\mathscr{C}_{q^2}(f,g)$, we
 present construction methods to obtain
maximal-entanglement EAQECCs. By Lemma 1, we know that the Hermitian
dual code of $\mathscr{C}_{q^2}(f,g)$ is generated by the pairs
$(g^{\bot q}(x),0)$ and $(-\bar{{f}^{q}}(x),1)$. Hence, code
$\mathscr{C}_{q^2}(f,g)$ has a parity check matrix as follows
\begin{equation*}
\begin{normalsize}
 H=\left(
\begin{array}{cc}
H_1 &\mathbf{0} \\
 H_{2} &I_{n}  \\
\end{array}
\right),
\end{normalsize}
\end{equation*}
where ${\rm deg}(g(x))\times n$ matrix $H_1$ and $n\times n$ matrix
$H_2$ are respectively circulant matrices determined by ${g}^{\bot
q}(x)$ and $-\bar{f}^q(x)$. $I_{n}$ denotes the $n\times n$ identity
matrix. Let matrix $M$ be the conjugate transpose of the circulant
matrix defined by $f(x)$. One can see that $H_2+M=\mathbf{0}$. In
the rest of the paper, we suppose that ${\rm gcd}(f(x), x^n-1)=1$.
It is easily deduced that matrices $M$ and $H_2$ are invertible.
\begin{theorem}
With the previous notions, suppose that $\mathscr{C}_{q^2}(f,g)$ is
a $[2n,n-{\rm deg}(g(x)),d]_{q^{2}}$ QC code with parity check
matrix
\begin{equation*}
\begin{normalsize}
 H=\left(
\begin{array}{cc}
H_1 &\mathbf{0} \\
 H_{2} &I_{n}  \\
\end{array}
\right),
\end{normalsize}
\end{equation*}
where $H_{1}H^{\dag}_{1}$ be a nonsingular matrix. Define
$P=H_{1}^{\dag}(H_{1}H_{1}^{\dag})^{-1}H_{1}-(H_{2}^{\dag}H_{2})^{-1}$.
If the number 1 isn't in the eigenvalue set of $P$, then there exist
two maximal-entanglement EAQECCs with parameters $[[2n,n-{\rm
deg}(g(x)),d;n+{\rm deg}(g(x))]]_{q}$ and $[[2n,n+{\rm
deg}(g(x)),d^{\bot_{h}};n-{\rm deg}(g(x))]]_{q}$, respectively,
where $d^{\bot_{h}}$ denotes the Hermitian dual distance of
$\mathscr{C}_{q^2}(f,g)$.
\end{theorem}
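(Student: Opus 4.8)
The plan is to reduce the maximal-entanglement claim to a single rank computation. By Theorem 8, for the $[2n,\,n-{\rm deg}(g(x))]_{q^{2}}$ code $\mathscr{C}_{q^2}(f,g)$ one has $c={\rm Rank}(HH^{\dag})=(n+{\rm deg}(g(x)))-{\rm dim}({\rm Hull}_h(\mathscr{C}_{q^2}(f,g)))$, so the entanglement is maximal, i.e. $c=n+{\rm deg}(g(x))=2n-(n-{\rm deg}(g(x)))$, exactly when the Hermitian hull is trivial, which in turn is equivalent to $HH^{\dag}$ attaining full rank $n+{\rm deg}(g(x))$. Writing $r={\rm deg}(g(x))$, I would first expand
\begin{equation*}
HH^{\dag}=\begin{pmatrix} H_1 H_1^{\dag} & H_1 H_2^{\dag} \\ H_2 H_1^{\dag} & H_2 H_2^{\dag}+I_n \end{pmatrix},
\end{equation*}
an $(n+r)\times(n+r)$ matrix whose leading $r\times r$ block $H_1 H_1^{\dag}$ is nonsingular by hypothesis.

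Next I would test nonsingularity of $HH^{\dag}$ through the Schur complement relative to $H_1 H_1^{\dag}$, namely
\begin{equation*}
S=H_2 H_2^{\dag}+I_n-H_2\bigl[H_1^{\dag}(H_1 H_1^{\dag})^{-1}H_1\bigr]H_2^{\dag}.
\end{equation*}
Substituting $H_1^{\dag}(H_1 H_1^{\dag})^{-1}H_1=P+(H_2^{\dag}H_2)^{-1}$ from the definition of $P$, and using that $H_2$ is invertible (guaranteed by ${\rm gcd}(f(x),x^n-1)=1$), the term $H_2(H_2^{\dag}H_2)^{-1}H_2^{\dag}$ collapses to $I_n$, leaving the clean form $S=H_2(I_n-P)H_2^{\dag}$. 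Since $H_2$ and $H_2^{\dag}$ are invertible and $\det(HH^{\dag})=\det(H_1 H_1^{\dag})\det(S)$, the matrix $HH^{\dag}$ is nonsingular precisely when $I_n-P$ is nonsingular, that is, when $1$ is not an eigenvalue of $P$. This is exactly the stated hypothesis, so under it ${\rm Rank}(HH^{\dag})=n+r$ is full and the hull is trivial.

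Finally I would assemble the two codes. With the hull trivial, Theorem 8 also gives ${\rm Rank}(GG^{\dag})=(n-r)-0=n-r$. Applying Theorem 7 to $\mathscr{C}_{q^2}(f,g)$ with $c=n+r$ yields the first EAQECC $[[2n,\,n-r,\,d;\,n+r]]_q$, and applying it to the Hermitian dual $\mathscr{C}^{\bot_{h}}_{q^2}(f,g)$, which is an $[2n,\,n+r,\,d^{\bot_{h}}]_{q^{2}}$ code whose parity-check matrix is $G$ and whose required entanglement is ${\rm Rank}(GG^{\dag})=n-r$, yields the second EAQECC $[[2n,\,n+r,\,d^{\bot_{h}};\,n-r]]_q$; in both cases $c=2n-k$, so each is maximal-entanglement. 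I expect the main obstacle to be the algebraic simplification of the Schur complement: one must carefully exploit both invertibility hypotheses — $H_1 H_1^{\dag}$ nonsingular in order to form $S$, and $H_2$ nonsingular in order to collapse $H_2(H_2^{\dag}H_2)^{-1}H_2^{\dag}$ to $I_n$ — so as to isolate the transparent criterion that $1$ is not an eigenvalue of $P$.
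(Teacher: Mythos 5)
Your proposal is correct and in substance identical to the paper's own proof: the paper's successive multiplications of $HH^{\dag}$ by the invertible block matrices $A$, $B$, $C$, $D$ are exactly your Schur-complement reduction with respect to the nonsingular block $H_{1}H_{1}^{\dag}$, followed by the congruence by $H_{2}^{-1}$ and $(H_{2}^{\dag})^{-1}$ that exposes the block $I_{n}-P$, so both arguments conclude ${\rm Rank}(HH^{\dag})=n+{\rm deg}(g(x))$ precisely from the hypothesis that $1$ is not an eigenvalue of $P$. Your framing of maximality through a trivial Hermitian hull and the count ${\rm Rank}(GG^{\dag})=(n-{\rm deg}(g(x)))-0$ for the dual's ebits is just the paper's appeal to Theorem 6 (Guenda et al.) in slightly different clothing, and the remaining assembly via the Wilde--Brun construction matches the paper's step for step.
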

\begin{proof}
Since $\mathscr{C}_{q^2}(f,g)$ is a $[2n,n-{\rm
deg}(g(x)),d]_{q^{2}}$ linear code, then its Hermitian dual
$\mathscr{C}^{\bot_{h}}_{q^2}(f,g)$ has parameters $[2n,n+{\rm
deg}(g(x)),d^{\bot_{h}}]_{q^{2}}.$ Applying Theorem 5, it provides
two $[[2n,-2{\rm deg}(g(x))+c_{1},d;c_{1}]]_{q}$ and $[[2n,2{\rm
deg}(g(x))+c_{2},d^{\bot_{h}};c_{2}]]_{q}$ EAQECCs, where
$d^{\bot_{h}}$ denotes the Hermitian dual distance of
$\mathscr{C}_{q^2}(f,g)$. Next we compute the number of entangled
ebits $c_{1}$ and $c_{2}$. Note that
\begin{equation*}
\begin{normalsize}
 HH^{\dag}=\left(
\begin{array}{cc}
H_1 &\mathbf{0} \\
 H_{2} &I_{n}  \\
\end{array}
\right)
\end{normalsize}
\begin{normalsize}
\left(
\begin{array}{cc}
H^\dag_1 & H_{2} ^\dag\\
\mathbf{0} &I_{n}  \\
\end{array}
\right)
\end{normalsize}
=
\begin{normalsize}
\left(
\begin{array}{cc}
H_1H^\dag_1 & H_1H_{2} ^\dag \\
H_2H^\dag_1 &H_2H_{2} ^\dag+I_{n}  \\
\end{array}
\right).
\end{normalsize}
\end{equation*}
By the hypothesis, $H_{1}H^{\dag}_{1}$ is a nonsingular matrix, then
we define the following matrices
\begin{equation*}
\begin{normalsize}
A=\left(
\begin{array}{cc}
{(H_{1}H^{\dag}_{1})}^{-1} &\mathbf{0} \\
 0 &I_{n}  \\
\end{array}
\right),
\end{normalsize}
\begin{normalsize}
B=\left(
\begin{array}{cc}
 I_{{\rm deg}(g(x))}&\mathbf{0} \\
-H_{2}H^{\dag}_{1} &I_{n} \\
\end{array}
\right)
\end{normalsize},
\end{equation*}
\begin{equation*}
\begin{normalsize}
 C=\left(
\begin{array}{cc}
 I_{{\rm deg}(g(x))}&\mathbf{0} \\
0 &H_{2}^{-1}  \\
\end{array}
\right)
\end{normalsize},
\begin{normalsize}
D=\left(
\begin{array}{cc}
 I_{{\rm deg}(g(x))}&\mathbf{0} \\
0 &(H_{2} ^\dag)^{-1}\\
\end{array}
\right)
\end{normalsize}.
\end{equation*}
Then
\begin{equation*}
\begin{normalsize}
AHH^{\dag}= \left(
\begin{array}{cc}
I_{{\rm deg}(g(x))} & {(H_{1}H^{\dag}_{1})}^{-1}H_{1}H_{2} ^\dag \\
H_2H^\dag_1 &H_2H_{2} ^\dag+I_{n}  \\
\end{array}
\right),
\end{normalsize}
\end{equation*}
\begin{equation*}
\begin{normalsize}
BAHH^{\dag}= \left(
\begin{array}{cc}
I_{{\rm deg}(g(x))} & {(H_{1}H^{\dag}_{1})}^{-1}H_{1}H_{2} ^\dag \\
\mathbf{0} &-H_{2}H_{1}^{\dag}(H_{1}H_{1}^{\dag})^{-1}H_{1}H_{2}^{\dag}+H_{2}H_{2}^{\dag}+I_{n}  \\
\end{array}
\right),
\end{normalsize}
\end{equation*}
and
\begin{equation*}
\begin{normalsize}
CBAHH^{\dag}D= \left(
\begin{array}{cc}
I_{{\rm deg}(g(x))} & {(H_{1}H^{\dag}_{1})}^{-1}H_{1} \\
\mathbf{0} &-H_{1}^{\dag}(H_{1}H_{1}^{\dag})^{-1}H_{1}+(H_{2}^{\dag}H_{2})^{-1}+I_{n}  \\
\end{array}
\right).
\end{normalsize}
\end{equation*}
If the number 1 isn't an eigenvalue of matrix
$P=H_{1}^{\dag}(H_{1}H_{1}^{\dag})^{-1}H_{1}-(H_{2}^{\dag}H_{2})^{-1}$,
then $-P+I_{n}$ and $CBAHH^{\dag}D$ are both full
 rank matrices.
Note that matrices $A$, $B$, $C$ and $D$ are all invertible, then
$c_{1}= {\rm Rank}(HH^{\dag}) = {\rm Rank}(AHH^{\dag})= {\rm
Rank}(BAHH^{\dag})= {\rm Rank}(CBAHH^{\dag}D)=n+{\rm deg}(g(x)).$
According to Theorem 6, $c_{2}= {\rm Rank}(GG^{\dag})= {\rm
Rank}(HH^{\dag})+2n-2(n+{\rm deg}(g(x)))=n-{\rm deg}(g(x))$. As a
consequence, it provides two maximal-entanglement EAQECCs with
parameters $[[2n,n-{\rm deg}(g(x)),d;n+{\rm deg}(g(x))]]_{q}$ and
$[[2n,n+{\rm deg}(g(x)),d^{\bot_{h}};n-{\rm deg}(g(x))]]_{q}$,
respectively.
\end{proof}
\par
Next we will construct some binary maximal-entanglement EAQECCs with
good parameters according to Theorem 7. Similarly, let $\omega$ be
the primitive element, and elements $0,1,\omega,\omega^2 \in
\mathbb{F}_{4}$ are represented by 0,1,2,3.

\begin{example}
Set $q=2$ and $n=7$. Select the following polynomials in the
quotient ring $\mathbb{F}_{4}[x]/\langle x^{7}-1 \rangle$,
\begin{equation*}
\begin{split}
g(x)=x+1, ~~f(x)=x^5 + 2 x^4 + 3 x^3 + 2 x^2 + 3x.
\end{split}
\end{equation*}
 Then ${{g}^{\bot q}}(x)=x^6 + x^5 + x^4
+ x^3 + x^2 + x + 1$ , $\bar{f}^q(x)=2 x^6 + 3 x^5 + 2 x^4 + 3x^3 +
x^2,$
\begin{equation*}
\begin{small}H_{1}= \left(
\begin{array}{ccccccc}
1111111
\end{array}
\right),~~ H_{2}= \left(
\begin{array}{ccccccc}
  0   0   1 3   2 3   2\\
  2   0   0   1
3 2 3\\
 3  2 0   0   1 3
2\\   2 3 2 0 0   1 3\\
 3
2 3 2 0 0 1\\
   1 3 2 3 2
0 0\\
0 1 3 2 3 2 0
\end{array}
\right)~~
\end{small}
\text{and} ~~
\begin{small} P= \left(
\begin{array}{ccccccc}
 0   0   2 3   2 3   0\\
  0   0   0
2 3 2 3\\
3   0 0   0   2 3 2\\
 2 3   0
0 0 2 3\\
 3 2 3 0 0 0 2\\
2 3   2 3 0 0 0\\
 0 2 3 2
3 0 0\\
\end{array}
\right).
\end{small}
\end{equation*}
 By calculation, $\mathscr{C}_{4}(f,g)$ is a
linear code with parameters $[14,6,7]_{4}$. Since the characteristic
polynomial of matrix $P$ is $x(x^3 + \omega)(x^3 + \omega^2)$, it is
easy to see that the number 1 isn't in its eigenvalue set. Applying
Theorem 7, it can provide a maximal-entanglement EAQECC with
parameters $[[14,6,7;8]]_{2}$. According to Lu's code tables of
maximal-entanglement EAQECCs in \cite{Lu2}, our code is optimal and
better than the best-known $[[14,6,6;8]]_{2}$ EAQECCs. So it breaks
the current records.
\end{example}
\par
\begin{example}
Let $q=2$, $n=11$ and define the following polynomials in the
quotient ring $\mathbb{F}_{4}[x]/\langle x^{11}-1 \rangle$,
\begin{equation*}
\begin{split}
g(x)=x^6 + 3x^5 + 3x^4 + 2x^2 + 2x + 1, ~~f(x)=x^4 + 2x^3 + 3x^2 +
x.
\end{split}
\end{equation*}
 Then ${g}^{\bot q}(x)=x^5 + 3x^4 + x^3 + x^2 + 2x + 1$, $\bar{f}^q(x)=x^{10} + 2x^9 + 3x^8 + x^7.$
We calculate that $\mathscr{C}_{4}(f,g)$ is a $[22,5,13]_{4}$ linear
code
 and its weight enumerator is
\begin{equation*}
\begin{split}
0^1
13^{66}14^{66}15^{198}16^{264}17^{99}18^{132}19^{132}20^{33}21^{33}
\end{split}
\end{equation*}
Using the MacWilliams equation, the weight enumerator of
$\mathscr{C}_{4}^{\bot_{h}}(f,g)$ is given as follows,
\begin{equation*}
\begin{split}
&0^14^{627}5^{6567}6^{52437}7^{364056}8^{2050290}9^{9562740}10^{
37269804}11^{122099016}12^{335494302}\\
&13^{774526170}14^{1493685534}
15^{2389566696}16^{3136710621}17^{3321093204}18^{2767437420}19^{
1748036664} \\
&20^{786523551}21^{224745015}22^{30644469}.
\end{split}
\end{equation*} Hence, $\mathscr{C}_{4}^{\bot_{h}}(f,g)$ is an optimal
linear code with parameters $[22,17,4]_{4}$ and meets requirements
of Theorem 7. Hence, a maximal-entanglement EAQECC with parameters
$[[22,17,4;5]]_{2}$ can be constructed. It has better parameters
than the
 $[[23,17,2;6]]_{2}$ maximal-entanglement EAQECC appeared in
 \cite{Liu}, whose minimum distance does not increase when the entangled
 states are added.
\end{example}
\par
We provide some binary maximal-entanglement EAQECCs in Tables 5 and
6, which are derived from quaternary QC codes $\mathscr{C}_{4}(f,g)$
and $\mathscr{C}^{\bot_{h}}_{4}(f,g)$, respectively. Compared to the
parameters of maximal-entanglement EAQECCs available in \cite{Liu},
 our EAQECCs have better performances.\par

In \cite{Lai1,Lai2}, authors have showed that almost $[[n,k,d;c]]$
EAQECCs were not equivalent to any standard $[[n+c,k,d]]$ QECCs and
had better performances than all $[[n+c,k,d]]$ QECCs. Even if a
maximal-entanglement $[[n,k,d;c]]$ EAQECCs was equivalent to a
$[[n+c,k,d]]$ QECCs, that ebits may be robust against noise when the
ebits were not noiseless. In the last column of Tables 5 and 6, we
list the best known standard QECCs in Grassl's code tables
\cite{Grassl1} with fixed code length $n+c$ and dimension $k$. We
can find that our $[[n,k;c]]$ EAQECCs have grater than or equal to
minimal distances than that of the best-known standard $[[n+c,k,d]]$
QECCs.

\begin{table}
   \caption{EAQECCs from $\mathscr{C}_{4}(f,g)$ and parameters comparison.}
   \begin{center}
   \begin{spacing}{1.6}
   \begin{scriptsize}
   \begin{tabular}{ccccc}
\hline  \hline
   $n$ &$f(x),g(x)$ over $\mathbb{F}_{4}$ &Our EAQECCs& EAQECCs \cite{Liu}&$[[n\!+\!c,k,d]]$\\
   \hline
15&$320213$,\quad$1^30^21^3$ &$[[30,8,15;22]]_{2}$&$[[30,8,7;22]]_{2}$&$[[52,8,10]]_{2}$\\
\hline
17&$1213^201$,\quad$1(10)^2(01)^21$ &$[[34,8,18;26]]_{2}$&$[[34,8,8;26]]_{2}$&$[[60,8,12]]_{2}$\\
\hline
$21$&$1^623201^2$,\quad$1320^4321$ &$[[42,10,17;32]]_{2}$&$[[42,10,8;32]]_{2}$&$[[74,10,14]]_{2}$\\
\hline
$31$&$1^60201$,\quad$10^31^3010^4101^30^31$ & $[[62,10,32;52]]_{2}$& $[[62,10,13;52]]_{2}$&$[[114,10,20]]_{2}$\\
\hline
$35$&$1^523^31$,\quad$12031301203^22^20310212031$ & $[[70,12,37;58]]_{2}$& $[[70,12,14;58]]_{2}$&$[[128,12,22]]_{2}$\\
 \hline
$41$&$1^9232$,\quad$1^30^4(10)^2(01)^20^41^3$ & $[[82,20,33;62]]_{2}$& $[[82,20,13;62]]_{2}$&$[[144,20,23]]_2$\\
\hline \hline
   \end{tabular}
   \end{scriptsize}
   \end{spacing}
\end{center}
\end{table}

\begin{table}
   \caption{EAQECCs from $\mathscr{C}^{\bot_{h}}_{4}(f,g)$ and parameters comparison.}
   \begin{center}
   \begin{spacing}{1.6}
   \begin{scriptsize}
   \begin{tabular}{ccccc}
\hline  \hline
   $n$ &$f(x),g(x)$ over $\mathbb{F}_{4}$ &New EAQECCs& EAQECCs \cite{Liu}&$[[n\!+\!c,k,d]]$\\
   \hline
17&$31^22^2$,\quad$1(10)^2(01)^21$ &$[[34,26,5;8]]_{2}$&$[[34,26,2;8]]_{2}$&$[[42,26,5]]_{2}$\\
\hline
19&$1^52031$,\quad$132^20103^221$ &$[[38,29,5;9]]_{2}$&$[[39,29,2;10]]_{2}$&$[[47,29,5]]_{2}$\\
\hline
$31$&$1^721$,\quad$10^31^3010^4101^30^31$ &$[[62,52,5;10]]_{2}$&$[[62,52,2;10]]_{2}$&$[[72,52,5]]_{2}$\\
\hline \hline
   \end{tabular}
   \end{scriptsize}
   \end{spacing}
\end{center}
\end{table}

\par
 Analogously, we can also present QC extended constructions to
obtain good maximal-entanglement EAQECCs.

\begin{proposition}
Let $q>2$ be a prime power and $\mathscr{C}_{q^2}(f,g)$ be a QC code
with generator matrix $G=(G_1,G_2)$ satisfying the conditions of
Theorem 7. Let $\mathscr{C}_i$ $(i=1,2)$ be a linear code generated
by $G_i$. Choose $x^{(i)}$ to be a codeword in
$\mathscr{C}^{\bot_{h}}_{i}$ and $\alpha_{i}\in
\mathbb{F}_{q^2}^{*}$ such that $\langle x^{(i)},x^{(i)}\rangle _h\neq(p-1)\alpha_{i}^{q+1}$. Then\\
 (i) The code $\mathscr{C}^{'}$ with generator matrix
 \begin{equation*}
 \begin{small}
G^{'}=\left(
 \begin{array}{ccc|ccc|c}
 &     & & &   & & 0\\
 &G_{1}& & &G_{2}&   &  \vdots\\
 &     & & &   & & 0\\
 \hline
 &x^{(1)}&& &0\cdots0&&\alpha_{1}
 \end{array}
\right)
 \end{small}
\end{equation*}
is a  $[2n + 1, n-{\rm deg}(g(x)) + 1]$ linear code with Hermitian
dual distance
$$d^{\bot_{h}}\leq d({\mathscr{C}^{'\bot_{h}}}) \leq d^{\bot_{h}}+1,$$
where $d^{\bot_{h}}$ denotes the Hermitian dual distance of
$\mathscr{C}_{q^2}(f,g)$. Moreover, \\${\rm Rank}(G^{'}{G^{'}}^\dag)
=n-{\rm
deg}(g(x))+1$.\\
 (ii) The code $\mathscr{C}^{''}$ with generator matrix
  \begin{equation*}
   \begin{small}
G^{''}=\left(
 \begin{array}{ccc|ccc|cc}
 &     & & &   & & 0&0\\
 &G_{1}& & &G_{2}&  &  \vdots&\vdots \\
 &     & & &   & & 0&0\\
 \hline
 &x^{(1)}&& &0\cdots0&&\alpha_{1}&0\\
\hline
  &0\cdots0&& &x^{(2)}&&0&\alpha_{2}\\
 \end{array}
\right)
 \end{small}
\end{equation*}
is a  $[2n + 2, n-{\rm deg}(g(x)) +2]$ linear code with Hermitian
dual distance
$$d^{\bot_{h}}\leq d({\mathscr{C}^{''\bot_{h}}}) \leq d^{\bot_{h}}+2,$$
where $d^{\bot_{h}}$ denotes the Hermitian dual distance of
$\mathscr{C}_{q^2}(f,g)$. Moreover, \\${\rm
Rank}(G^{''}{G^{''}}^\dag)=n-{\rm
deg}(g(x))+2$.\\
\end{proposition}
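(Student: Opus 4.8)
The plan is to mirror the Gram-matrix computation of Proposition 1, but to read it through the rank lens supplied by Theorem 7 rather than forcing self-orthogonality. First I would form the Hermitian Gram matrices $G'{G'}^\dag$ and $G''{G''}^\dag$ in the same block form as in the proof of Proposition 1. Because $x^{(i)}\in\mathscr{C}^{\bot_{h}}_{i}$, every row of $G_i$ is Hermitian-orthogonal to $x^{(i)}$, which gives $G_i{x^{(i)}}^\dag=\mathbf{0}$ and $x^{(i)}G_i^\dag=\mathbf{0}$; this annihilates all the off-diagonal coupling blocks. The orthogonality of $(x^{(1)},0,\alpha_1,0)$ against $(0,x^{(2)},0,\alpha_2)$ is immediate from their disjoint supports, so in case (ii) the two appended rows also decouple. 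Hence both Gram matrices become block diagonal, with top-left block $G_1G_1^\dag+G_2G_2^\dag=GG^\dag$ and new diagonal entries $\langle x^{(i)},x^{(i)}\rangle_h+\alpha_i^{q+1}$.

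The one genuinely new ingredient is the rank of the top-left block. Under the hypotheses of Theorem 7, its proof already shows ${\rm Rank}(GG^\dag)=n-{\rm deg}(g(x))$; since $GG^\dag$ is a square matrix of exactly that size, it is in fact nonsingular. Therefore the rank of each block-diagonal Gram matrix equals $n-{\rm deg}(g(x))$ plus the number of nonzero new diagonal entries, and it remains only to check that each $\langle x^{(i)},x^{(i)}\rangle_h+\alpha_i^{q+1}$ is nonzero. This is precisely where the hypothesis enters: in characteristic $p$ one has $-1=p-1$, so $-\alpha_i^{q+1}=(p-1)\alpha_i^{q+1}$, and the stated condition $\langle x^{(i)},x^{(i)}\rangle_h\neq(p-1)\alpha_i^{q+1}$ is exactly $\langle x^{(i)},x^{(i)}\rangle_h+\alpha_i^{q+1}\neq 0$. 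This yields ${\rm Rank}(G'{G'}^\dag)=n-{\rm deg}(g(x))+1$ and ${\rm Rank}(G''{G''}^\dag)=n-{\rm deg}(g(x))+2$.

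For the remaining assertions I would argue exactly as in Proposition 1. The dimension count follows because the appended rows carry the nonzero scalars $\alpha_i\in\mathbb{F}_{q^2}^{*}$ in the new coordinates where every other row vanishes, so the extra row(s) are linearly independent from the rows of $G$; thus $\mathscr{C}^{'}$ and $\mathscr{C}^{''}$ have the claimed lengths and dimensions. The dual-distance bounds are the verbatim column-independence argument of Proposition 1: every $d^{\bot_{h}}-1$ columns of $G$ are linearly independent, and appending rows together with the new unit-type columns preserves this, giving the lower bound $d^{\bot_{h}}\le d(\mathscr{C}^{'\bot_{h}})$; the upper bounds $d^{\bot_{h}}+1$ and $d^{\bot_{h}}+2$ come from adjusting a minimal linearly dependent set of columns of $G$ by the new columns.

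I do not expect a serious obstacle: the computation is routine once the coupling blocks are seen to vanish. The only two points needing care are (a) invoking Theorem 7 to upgrade ${\rm Rank}(GG^\dag)=n-{\rm deg}(g(x))$ to genuine nonsingularity of the top-left block, and (b) the characteristic-$p$ identification $-\alpha_i^{q+1}=(p-1)\alpha_i^{q+1}$ that matches the hypothesis to the non-vanishing of the new diagonal entries. The restriction $q>2$ is what keeps $\mathbb{F}_{q^2}^{*}$ large enough to select a suitable $\alpha_i$, so I would flag that the existence of an admissible pair $(x^{(i)},\alpha_i)$ is being assumed rather than established within the proof.
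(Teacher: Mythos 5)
Your proposal is correct and is essentially the paper's intended argument: the paper omits the proof of this proposition, stating only that it is ``similar to Proposition 1,'' and your reconstruction is exactly that---the same block Gram-matrix computation with the off-diagonal blocks killed by $x^{(i)}\in\mathscr{C}_i^{\perp_h}$, the diagonal entries $\langle x^{(i)},x^{(i)}\rangle_h+\alpha_i^{q+1}$ kept nonzero by the hypothesis (via $p-1=-1$), the nonsingularity of $GG^{\dag}$ imported from Theorem 7 (where ${\rm Rank}(GG^{\dag})=n-\deg(g(x))$ equals the number of rows of $G$), and the verbatim column-independence argument for the dual-distance bounds. Your closing flag is the only quibble: for $q>2$ the norm map $\alpha\mapsto\alpha^{q+1}$ is onto $\mathbb{F}_q^{*}$, which has at least two elements, so an admissible $\alpha_i$ exists for \emph{every} choice of $x^{(i)}$ rather than being an unestablished assumption.
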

\par
The process of proof is similar to Proposition 1, we omit it here.
From Theorem 5 and Proposition 2, the following result can be
concluded directly.
\begin{theorem}
Let $q>2$ be a prime power and $\mathscr{C}_{q^2}(f,g)$ a QC code
with generator matrix $G=(G_1,G_2)$ satisfying the conditions of
Theorem 7. Then, there exist two maximal-entanglement EAQECCs with
parameters $[[2n+1,n+{\rm deg}(g(x)),d^{'};n-{\rm deg}(g(x))+1]]$
and $[[2n+2,n+{\rm deg}(g(x)),d^{''};n-{\rm deg}(g(x))+2]]$,
respectively. Moreover, $d^{\bot_{h}}\leq
d({\mathscr{C}^{'\bot_{h}}}) \leq d^{\bot_{h}}+1$ and
$d^{\bot_{h}}\leq d({\mathscr{C}^{''\bot_{h}}}) \leq
d^{\bot_{h}}+2$, where $d^{\bot_{h}}$ denotes the Hermitian dual
distance of $\mathscr{C}_{q^2}(f,g)$.
\end{theorem}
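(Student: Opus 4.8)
The plan is to deduce the statement from Proposition~2 together with the entanglement-counting machinery of Theorems~5 and~6, mirroring exactly how Theorem~7 was obtained, but now applied to the two \emph{extended} codes rather than to $\mathscr{C}_{q^2}(f,g)$ itself. First I would invoke Proposition~2: part~(i) yields the code $\mathscr{C}^{'}$ of length $2n+1$ and dimension $K'=n-\mathrm{deg}(g(x))+1$ with $\mathrm{Rank}(G^{'}{G^{'}}^{\dag})=K'$, and part~(ii) yields $\mathscr{C}^{''}$ of length $2n+2$ and dimension $K''=n-\mathrm{deg}(g(x))+2$ with $\mathrm{Rank}(G^{''}{G^{''}}^{\dag})=K''$. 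The decisive observation is that in each case $\mathrm{Rank}(GG^{\dag})$ equals the full dimension of the code. Since Theorem~6 gives $\mathrm{Rank}(GG^{\dag})=k-\mathrm{dim}(\mathrm{Hull}_h(\mathscr{C}))$, this forces $\mathrm{dim}(\mathrm{Hull}_h(\mathscr{C}^{'}))=\mathrm{dim}(\mathrm{Hull}_h(\mathscr{C}^{''}))=0$, i.e.\ both extended codes have trivial Hermitian hull.

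Next I would feed the Hermitian \emph{duals} into Theorem~5, so that the reported minimum distance comes out as claimed. Take $\mathscr{C}^{'\bot_h}$, which has parameters $[2n+1,\,n+\mathrm{deg}(g(x)),\,d']$ with $d'=d(\mathscr{C}^{'\bot_h})$, as the input code of Theorem~5. By Theorem~6 the ebit number depends only on the hull, and since $\mathrm{Hull}_h(\mathscr{C}^{'\bot_h})=\mathrm{Hull}_h(\mathscr{C}^{'})=\{0\}$ one gets $c=(2n+1)-(n+\mathrm{deg}(g(x)))-0=n-\mathrm{deg}(g(x))+1$ (equivalently $c=\mathrm{Rank}(G^{'}{G^{'}}^{\dag})$, a generator matrix of $\mathscr{C}^{'}$ serving as a parity-check matrix for $\mathscr{C}^{'\bot_h}$). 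Substituting into the formula $[[N,2k-N+c,d;c]]$ of Theorem~5 gives dimension $2(n+\mathrm{deg}(g(x)))-(2n+1)+(n-\mathrm{deg}(g(x))+1)=n+\mathrm{deg}(g(x))$ and distance $d'$, that is, the EAQECC $[[2n+1,\,n+\mathrm{deg}(g(x)),\,d';\,n-\mathrm{deg}(g(x))+1]]$. The identity $c=N-k=(2n+1)-(n+\mathrm{deg}(g(x)))=n-\mathrm{deg}(g(x))+1$ confirms that it is maximal-entanglement. The case of $\mathscr{C}^{''\bot_h}$ is identical with $N=2n+2$ and $c=\mathrm{Rank}(G^{''}{G^{''}}^{\dag})=n-\mathrm{deg}(g(x))+2$, yielding $[[2n+2,\,n+\mathrm{deg}(g(x)),\,d'';\,n-\mathrm{deg}(g(x))+2]]$, again maximal-entanglement. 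Finally, the distance estimates $d^{\bot_h}\le d'\le d^{\bot_h}+1$ and $d^{\bot_h}\le d''\le d^{\bot_h}+2$ are inherited verbatim from Proposition~2.

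Because the heavy lifting, namely the rank identities $\mathrm{Rank}(G^{'}{G^{'}}^{\dag})=K'$ and $\mathrm{Rank}(G^{''}{G^{''}}^{\dag})=K''$, is already packaged in Proposition~2, and the passage from rank to ebit number is supplied by Theorem~6, the argument is essentially bookkeeping. The one point that requires care, and which I regard as the main obstacle, is the \emph{alignment of inputs}: one must feed the dual code $\mathscr{C}^{'\bot_h}$ (not $\mathscr{C}^{'}$) into Theorem~5, so that the EAQECC distance becomes $d(\mathscr{C}^{'\bot_h})$ and the dimension becomes $n+\mathrm{deg}(g(x))$; feeding $\mathscr{C}^{'}$ itself would instead produce the complementary code $[[2n+1,\,n-\mathrm{deg}(g(x))+1,\,d(\mathscr{C}^{'});\,n+\mathrm{deg}(g(x))]]$, which is also maximal-entanglement but is not the code asserted here. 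The only structural fact beyond arithmetic is that a code and its Hermitian dual share the same hull, used to transport $\mathrm{dim}\,\mathrm{Hull}_h=0$ across duality, and this is exactly what Theorem~6 records.
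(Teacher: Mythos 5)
Your proposal is correct and follows essentially the same route as the paper, which omits the proof precisely because Theorem~8 is meant to be read off from Theorem~5 and Proposition~2 in the way you describe: apply Theorem~5 to the Hermitian duals $\mathscr{C}^{'\bot_{h}}$ and $\mathscr{C}^{''\bot_{h}}$, and use the rank identities ${\rm Rank}(G^{'}{G^{'}}^{\dag})=n-{\rm deg}(g(x))+1$ and ${\rm Rank}(G^{''}{G^{''}}^{\dag})=n-{\rm deg}(g(x))+2$ (equivalently, via Theorem~6, the triviality of the Hermitian hull, which transports across duality) to identify $c$ and check $c=N-k$, mirroring the proof of Theorem~7. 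Your explicit warning that one must feed in the dual code rather than $\mathscr{C}^{'}$ itself is exactly the right alignment and is the only non-bookkeeping point in the argument.
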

\begin{example}
Assume that $q=9$ and $n=10$. Let $\zeta$ be a primitive element of
$\mathbb{F}_{81}$. Consider the following polynomials in
$\mathbb{F}_{81}[x]/\langle x^{10}-1\rangle$,
\begin{equation*}
\begin{split}
g(x)=x^7 + \zeta^{44}x^6 + \zeta^{58}x^5 + \zeta^{52}x^4 +
\zeta^{36}x^3 + \zeta^{10}x^2 + \zeta^{44}x + \zeta^{48},
\end{split}
\end{equation*}
$$f(x)=\zeta^{14}x^2+\zeta^2x+1.$$
 Then ${g}^{\bot q}(x)=\zeta^8x^3 + \zeta^{12}x^2 + \zeta^{36}x + 1$, $\bar{f}^q(x)=\zeta^{18}x^9 + \zeta^{46}x^8 + 1,$
\begin{equation*}
\begin{tiny}
H_{1}= \left(
\begin{array}{cccccccccc}
 1 \zeta^{36} \zeta^{12}  \zeta^8    0    0    0    0    0    0\\
0 1 \zeta^{36} \zeta^{12}  \zeta^8    0    0    0    0    0\\
0 0 1 \zeta^{36} \zeta^{12} \zeta^8 0    0    0    0\\
0    0 0 1 \zeta^{36} \zeta^{12} \zeta^8 0 0 0\\
0    0    0 0 1 \zeta^{36} \zeta^{12} \zeta^8 0 0\\
 0 0 0 0 0  1 \zeta^{36} \zeta^{12} \zeta^8 0\\
 0 0 0 0 0 0 1 \zeta^{36} \zeta^{12}  \zeta^8 \\
\end{array}
\right),~~ H_{2}= \left(
\begin{array}{ccccccc}
2    0    0    0    0    0    0    0  \zeta^6 \zeta^{58}\\
\zeta^{58}    2    0    0    0    0    0    0    0  \zeta^6\\
\zeta^6 \zeta^{58} 2 0 0    0    0    0    0    0\\
0  \zeta^6 \zeta^{58} 2 0 0 0 0 0    0 \\
0    0  \zeta^6 \zeta^{58}    2 0 0 0 0 0\\
 0 0 0 \zeta^6 \zeta^58    2 00 0 0\\
 0  0 0 0 \zeta^6 \zeta^{58} 2 0    0    0\\
 0    0    0    0 0 \zeta^6 \zeta^{58} 2 0 0\\
0 0 0 0 0    0  \zeta^6 \zeta^{58} 2   0\\
 0    0 0 0 0 0 0 \zeta^6 \zeta^{58} 2\\
\end{array}
\right),
\end{tiny}
\begin{tiny} P= \left(
\begin{array}{cccccccccc}
 \zeta^{60} \zeta^{19}  1 \zeta^{28} \zeta^{41}  0 \zeta^{49} \zeta^{12} 1
 \zeta^{11}\\
\zeta^{11} \zeta^{60} \zeta^{19}    1 \zeta^{28} \zeta^{41}    0
\zeta^{49} \zeta^{12} 1\\ 1 \zeta^{11} \zeta^{60} \zeta^{19}    1
\zeta^{28} \zeta^{41} 0 \zeta^{49} \zeta^{12}\\ \zeta^{12} 1
\zeta^{11} \zeta^{60} \zeta^{19} 1 \zeta^{28} \zeta^{41} 0 \zeta^{49}\\
\zeta^{49} \zeta^{12} 1 \zeta^{11} \zeta^{60} \zeta^{19} 1
\zeta^{28} \zeta^{41} 0\\  0
\zeta^{49} \zeta^{12} 1 \zeta^{11} \zeta^{60} \zeta^{19} 1 \zeta^{28} \zeta^{41}\\
\zeta^{41}
0 \zeta^{49} \zeta^{12} 1 \zeta^{11} \zeta^{60} \zeta^{19} 1 \zeta^{28}\\
\zeta^{28} \zeta^{41} 0 \zeta^{49} \zeta^{12} 1 \zeta^{11}
\zeta^{60} \zeta^{19} 1\\  1 \zeta^{28}
\zeta^{41}    0 \zeta^{49} \zeta^{12} 1 \zeta^{11} \zeta^{60} \zeta^{19}\\
\zeta^{19} 1 \zeta^{28}
\zeta^{41} 0 \zeta^{49} \zeta^{12} 1 \zeta^{11} \zeta^{60}\\
\end{array}
\right).
\end{tiny}
\end{equation*}
 Since the characteristic polynomial of matrix $P$ is
$x(x+\zeta^{10})(x+\zeta^{30})(x+\zeta^{50})(x+
\zeta^{70})(x+\zeta^{60})^2(x+\zeta^{20})^3$, then the number 1
isn't an eigenvalue of the matrix $P$. Moreover,
$H_{1}H^{\dag}_{1}$ is nonsingular. Therefore, the QC code
$\mathscr{C}_{q^2}(f,g)$ with a generator matrix
\begin{equation*}
 \begin{small}
 G= \left(
\begin{array}{cccccccccc cccccccccc}
\zeta^{48} \zeta^{44} \zeta^{10} \zeta^{36} \zeta^{52} \zeta^{58}
\zeta^{44} 1 0 0 \zeta^{48} \zeta^{58} \zeta^{18} \zeta^{15}
\zeta^{36} \zeta^{56} \zeta^{24}
    \zeta^{23} \zeta^{65} \zeta^{14}\\
0 \zeta^{48} \zeta^{44} \zeta^{10} \zeta^{36} \zeta^{52} \zeta^{58}
\zeta^{44} 1 0 \zeta^{14} \zeta^{48} \zeta^{58} \zeta^{18}
\zeta^{15} \zeta^{36} \zeta^{56}
    \zeta^{24} \zeta^{23} \zeta^{65}\\
0 0 \zeta^{48} \zeta^{44} \zeta^{10} \zeta^{36} \zeta^{52}
\zeta^{58} \zeta^{44} 1 \zeta^{65} \zeta^{14} \zeta^{48} \zeta^{58}
\zeta^{18} \zeta^{15} \zeta^{36}
    \zeta^{56} \zeta^{24} \zeta^{23}\\
\end{array}
\right)
 \end{small}
\end{equation*}
satisfies the conditions of Theorem 7.
\par
We choose $x^{(1)}=(\zeta^{44}, \zeta^{71}, \zeta^{56}, \zeta^{22},
\zeta^{52}, \zeta^{73}, \zeta^{33}, \zeta^{58}, \zeta^{58},
\zeta^{33})$ and $x^{(2)}= (\zeta^{18}, \zeta^{41}, 2, \zeta^{10},
\zeta^{17}, \zeta^{31}, \zeta^{71}, \zeta^{61}, \zeta^{66},
\zeta^{75})$. From $x^{(1)}{x^{(1)}}^{\dag}=\zeta^{60}$ and
$x^{(2)}{x^{(2)}}^{\dag}=\zeta^{50}$, we may take $\alpha_{1}$ and
$\alpha_{2}$ both equal to $1$. Hence, the extended QC code
$\mathscr{C}^{''}$ has a generator matrix as follows
\begin{equation*}
 \begin{small}
 G^{''}= \left(
\begin{array}{cccccccccc cccccccccc cc}
\zeta^{48} \zeta^{44} \zeta^{10} \zeta^{36} \zeta^{52} \zeta^{58}
\zeta^{44} ~1 ~0~ 0~ \zeta^{48} \zeta^{58} \zeta^{18} \zeta^{15}
\zeta^{36} \zeta^{56} \zeta^{24}
    \zeta^{23} \zeta^{65} \zeta^{14}~0~0\\
0~ \zeta^{48} \zeta^{44} \zeta^{10} \zeta^{36} \zeta^{52} \zeta^{58}
\zeta^{44} ~1~ 0~ \zeta^{14} \zeta^{48} \zeta^{58} \zeta^{18}
\zeta^{15} \zeta^{36} \zeta^{56}
    \zeta^{24} \zeta^{23} \zeta^{65}~0~0\\
0 ~0 \zeta^{48} \zeta^{44} \zeta^{10} \zeta^{36} \zeta^{52}
\zeta^{58} \zeta^{44} ~1 ~\zeta^{65} \zeta^{14} \zeta^{48}
\zeta^{58} \zeta^{18} \zeta^{15} \zeta^{36}
    \zeta^{56} \zeta^{24} \zeta^{23}~0~0\\
\zeta^{44} \zeta^{71} \zeta^{56} \zeta^{22} \zeta^{52} \zeta^{73}
\zeta^{33} \zeta^{58} \zeta^{58} \zeta^{33} ~~0~~ 0~~ 0~~ 0~~ 0~~
0~~ 0~~ 0~~ 0~~ 0~~ 1~~ 0\\
0 ~~0 ~~0~~ 0~~ 0~~ 0~~ 0~~ 0~~ 0~~ 0~~ \zeta^{18} \zeta^{41} ~2~
\zeta^{10} \zeta^{17} \zeta^{31} \zeta^{71} \zeta^{61}
\zeta^{66} \zeta^{75} ~~0~~ 1\\
\end{array}
\right).
 \end{small}
\end{equation*}
Applying the MacWilliams equation, we calculate that
${\mathscr{C}^{''\bot_{h}}}$ is a $[22,17,5]_{81}$ linear code.
According to Theorem 8, a new maximal-entanglement EAQECC with
parameters $[[22,17,5;5]]_{9}$ can be derived, which is superior to
the codes with parameters $[[23,17,3;6]]_{9}$ appeared in
\cite{Liu}. Note that a standard pure $[[22,10,5]]_{9}$ QECC is the
best code meeting the quantum GV bounds with code length $n=22$ and
minimum distance $d=5$. Compared with this standard QECC, our
constructed EAQECC has $9^7=4782969$ more codewords for the same
code length and minimum distance although we add $5$ entanglement
ebits indeed.

\end{example}

\section {Conclusions}
In this paper, by a class of one-generator QC codes, we presented QC
extended constructions that preserved the self-orthogonality. As an
application, some good stabilizer QECCs over small finite fields
$\mathbb{F}_{2}$ and $\mathbb{F}_{3}$ were obtained. In the binary
case, some of our quantum codes broken or matched the current
records. In the ternary case, our codes filled some gaps or had
better performances than the current results.\par

It is well-known that the most common way of constructing QECCs now
is from cyclic and constacyclic codes \cite{Aly,Kai,Ma,Zhu}. But in
most cases, in order to gain good QECCs, we need the code length $n$
to divide ${q^s-1}$ for some positive integer $s$. From our extended
QC constructions, one can see that our method can breakthrough the
restriction partly, which produces QECCs with more flexible code
lengths. Further, we have constructed maximal-entanglement EAQECCs
from
 QC codes and their extended codes as well. Some good maximal-entanglement EAQECCs were derived and their
parameters were compared. To the best of our knowledge,
 this is the first attempt to construct maximal-entanglement EAQECCs from QC codes and their extended codes.
 \par
However, one can find that our construction only can provide QECCs
and EAQECCs with a relatively small distance. As the dimension
increase, calculating the exact Hermitian dual distance will be
computationally intractable (NP-hard) even if we used the
MacWilliams equation. So in future study, a lower bound for our QC
extended construction is extremely valuable.

\par

\section*{Acknowledgments}
This work is supported by National Natural Science Foundation of
China (Nos.11471011, 11801564, 11901579).



\end{document}